\newtheorem{theorem}{Theorem}
\newtheorem{hyp}{Hypothesis}
\newtheorem{lemma}{Lemma}%[section]
\newtheorem{theorem1}{Theorem A}%[section]
\newtheorem*{remark}{Remark 1}
\newtheorem*{proof}{Proof of Lemma 1}
\newtheorem*{proof1}{Proof of Theorem 1}
\newtheorem*{proof2}{Proof of Theorem 2}
\journal{Transportation Research Part C}
\begin{document}

\begin{frontmatter}

%% Title, authors and addresses

\title{Managing network congestion with a tradable credit scheme: a trip-based MFD approach}

%% use the tnoteref command within \title for footnotes;
%% use the tnotetext command for the associated footnote;
%% use the fnref command within \author or \address for footnotes;
%% use the fntext command for the associated footnote;
%% use the corref command within \author for corresponding author footnotes;
%% use the cortext command for the associated footnote;
%% use the ead command for the email address,
%% and the form \ead[url] for the home page:
%%
%% \title{Title\tnoteref{label1}}
%% \tnotetext[label1]{}
%%\author{Name\fnref{label22}}
%% \ead{email address}
%% \ead[url]{home page}
%%\fntext[label22]{test}
%%\cortext[cor1]{}
%% \address{Address\fnref{label3}}
%% \fntext[label3]{}

%% use optional labels to link authors explicitly to addresses:
%% \author[label1,label2]{<author name>}
%% \address[label1]{<address>}
%% \address[label2]{<address>}

\author[1]{Renming Liu\corref{corr1}}
\author[2]{Siyu Chen}
\author[1]{Yu Jiang\corref{corr2}}
\author[3]{Ravi Seshadri}
\author[2]{Moshe E. Ben-Akiva}
\author[1]{Carlos Lima Azevedo}

\address[1]{DTU Management, Technical University of Denmark, Denmark}
\address[2]{CEE, Massachusetts Institute of Technology, United States}
\address[3]{Singapore-MIT Alliance for Research and Technology, Singapore}
\cortext[corr1]{corresponding author}
\cortext[corr2]{co-corresponding author}

\begin{abstract}
This study investigates the efficiency and effectiveness of an area-based tradable credit scheme (TCS) using the trip-based Macroscopic Fundamental Diagram model for the morning commute problem. In the proposed TCS, the regulator distributes initial credits to all travelers and designs a time-varying and trip length specific credit tariff. Credits are traded between travelers and the regulator via a credit market, and the credit price is determined by the demand and supply of credits. The heterogeneity of travelers is considered in terms of desired arrival time, trip length and departure-time choice preferences. The TCS is incorporated into a day-to-day modelling framework to examine the travelers' learning process, the evolution of network, and the properties of the credit market. The existence of an equilibrium solution and the uniqueness of the credit price at the equilibrium state are established analytically. Furthermore, an open-source simulation framework is developed to validate the analytical properties of the proposed TCS and compare it with alternative control strategies in terms of mobility, network performance, and social welfare. Bayesian optimization is then adopted to optimize the credit toll scheme. The numerical results demonstrate that the proposed TCS outperforms the no-control case and matches the performance of the time-of-day pricing strategy, while maintaining revenue-neutral nature.

\end{abstract}

\begin{keyword}
% Tradable Credits \sep Macroscopic Fundamental Diagram \sep Demand Management \sep Networks
Tradable Credits \sep Trip-based Macroscopic Fundamental Diagram \sep Demand Management \sep Day-to-day dynamics \sep Bayesian optimization
%% keywords here, in the form: keyword \sep keyword

%% MSC codes here, in the form: \MSC code \sep code
%% or \MSC[2008] code \sep code (2000 is the default)

\end{keyword}

\end{frontmatter}

%%
%% Start line numbering here if you want
%%
%\linenumbers

%% main text
\section{Introduction}
\label{S:1}
Road traffic externalities are a serious problem that affect urban transportation networks worldwide and their severity continues to increase imposing significant costs on the traveler, environment, economy, and society. Efforts to alleviate the externalities can be explored from either the supply or the demand perspective. While traditional solutions on the supply side based on building additional infrastructure are known to sometimes be counterproductive \citep{johnston1995capacity}, demand management solutions, from the widely used price instruments to the emerging, but less explored, quantity control instrument attract more attention. 

Since the profound work by \citet{pigou2013economics}, congestion pricing (CP) has received a great deal of attention over a century in both theory and practice due to its potential gains in social welfare \citep{lindsey2006economists}. Nevertheless, road pricing often receives political and social resistance as it is perceived as a tax \citep{de2020tradable}. For this reason, researchers have been exploring alternative and more appealing demand management solutions such as the tradable credit scheme (TCS) in recent years \citep{fan2013tradable, grant2014role, dogterom2017tradable}. A typical TCS system has the following features \citep{fan2013tradable}: 1) a total quota of credits available for the study area is prespecified; 2) a regulator provides an initial endowment of credits to all potential travelers; 3) the credits can be bought and sold in a market that is monitored by the regulator at a price determined by demand and supply interactions; 4) in order to travel, travelers need to spend a certain number of credits (i.e., tariff) to access the urban transportation system. The credits, also termed as permits in this study, could vary with the conditions of the specific mobility alternative used; 5) enforcement is necessary to ensure the permits being consumed or traded validly. Consequently, the TCS is a revenue-neutral and equitable demand management scheme with potentially high public acceptance \citep{verhoef1997tradeable,de2018congestion}.

Recently, several researchers have studied the modeling of the TCS considering a static traffic equilibrium in terms of flow pattern and credit market price. For example, \citet{yang2011managing} demonstrated that for a given link-specific credit charge scheme (i.e., a credit toll vector), the equilibrium link flow pattern is unique under standard assumptions, and the credit price is also unique under an additional mild assumption (i.e., there are at least two paths with different credit tariffs connecting the same origin-destination pair in the equilibrium path set). \citet{xiao2013managing} showed that the optimal time-varying charge of credits at a bottleneck always exists under the assumption that late arrival is prohibited. \citet{miralinaghi2016multi} developed a multi-period TCS to guarantee a stable credit price. The numerical results for a general network demonstrated that by setting a penalty on transferring credits from one period to a future period and a credit recycling price, the credit price becomes less volatile. \citet{de2018congestion} compared a standard TCS with a congestion pricing in terms of social welfare, using a network containing two travel modes, i.e., highways with route-specific tolls (in either dollars or credits) and public transport with a zero toll. The analytical analysis and numerical results showed that the TCS is equivalent to the congestion pricing under fully adaptive tolls, but outperforms it under non-adaptive tolls typically when the congestion function is relatively steep compared to the demand function. \citet{bao2019regulating} clarified that the equilibrium 
related to the credit price and departure rate of the bottleneck model with a TCS is not unique and proved the uniqueness using an alternative congestion model developed by \citet{CHU1995324}, where travel time only relates to the arrival rate when the trip is completed.

Although there is a large body of research that utilizes static network equilibrium approaches, few studies investigate the dynamics of the credit price considering traveler behavior at the individual level. \citet{ye2013continuous} used a day-to-day learning model within a route choice setting to reveal the dynamic evolution process of traffic flow and credit price under a TCS. More recently, \citet{brands2020tradable} empirically tested the market of a TCS through a lab-in-the-field experiment where participants make virtual travel choices and real transactions in a tradable parking permits setting. The results showed that the designed market achieved credit prices within a desired range, and the number of bought and sold quantities kept close to each other, in accordance with a theoretical market equilibrium. Also recently, aiming towards a TCS closer to practice, \cite{trinity2020working} modelled the detailed and joint individual decision making (namely, the buying, selling and departure time choices) together with the regulator’s operations in a microscopic time-based simulation framework. Under this setting, the authors proposed a market model with desirable stability and equilibrium properties and compared against a time of day pricing control and the no-control case under different demand scenarios.
Ultimately, and in the absence of empirical evidence, these few existing contributions point to known state-of-the-art day-to-day learning frameworks for capturing some of the demand-supply interactions. Indeed,  different day-to-day models have been developed in recent years to evaluate other traffic control strategies, such as: \citet{cantarella2015day} studied the impact of different bus operating strategies and travelers' choices within a day-to-day dynamic process; \citet{guo2016day} investigated the properties of the day-to-day dynamic traffic flows in a general network under a congestion price; or \citet{yildirimoglu2020demand}  proposed departure time allocation optimization under both a day-to-day and a within-day framework. All these frameworks are often extremely helpful in ensuring desirable equilibrium properties in disaggregate modelling frameworks of demand-supply interactions, as it is for the TCS schemes at stake in this paper.

On the demand side, both detailed market interactions and day-to-day learning processes are still to be explored by researchers, whereas on the supply side, general and toy networks with link-based credit charging have prevailed in most studies, thus also limiting the design towards practice-ready TCS \citep{lessan2019credit}. In this paper we propose to examine the TCS under a single reservoir network using the Macroscopic Fundamental Diagram (MFD) \citep{daganzo2007urban,geroliminis2007macroscopic} and investigate corresponding properties. Early and recent MFD applications, such as \citep{arnott2013bathtub,fosgerau2013hypercongestion, Amirgholy2017215}, showcased the quantification of potential benefits in terms of congestion reduction under quantity control and pricing schemes by keeping the accumulation no greater than the flow-maximizing value. The individual attributes such as heterogeneous trip length was considered in \citep{arnott2013bathtub, fosgerau2015congestion,daganzo2015distance, lamotte2016morning}, where a reformulation of the computation for trip length is developed, henceforward referred to as \textit{trip-based MFD}. There are basically three advantages of utilizing the trip-based MFD model: 1) Compared to the traditional MFD model (or \textit{accumulation-based MFD model} \citep{leclercq2015macroscopic}), where the predicted outflow increases instantaneously when there is a sharp increase in the inflow, trip-based MFD model accounts for a reaction time to the sudden change in demand and computes the outflow only considering travelers who have completed their trips, providing more reliable results \citep{mariotte2017macroscopic}; 2) Trip-based MFD model is able to accommodate a more realistic heterogeneity of individual travelers in terms of trip length, desired arrival time, and schedule deviation penalties \citep{lamotte2018morning}; and 3) Trip-based MFD model allows for testing distance-based TCS schemes, bringing additional degrees of freedom to the scheme's design process towards fairness and efficiency \citep{daganzo2015distance}.

% Moreover, as pointed out by \citet{daganzo2015distance}, usage-based (length-specific) tolling is more efficient than length-agnostic tolling, it is promising to obtain further gains by applying the TCS to the trip-based MFD model.
Moreover, a proper design of credit charging scheme is required to make the TCS effective for demand management. The system optimal credit scheme are usually derived through a closed-form objective when dealing with a static traffic equilibrium model (e.g., \citet{yang2011managing, de2018congestion} for link-specific credit tolls, and \citet{xiao2013managing, bao2019regulating} for time-varying credit tolls). However, it is more complex to obtain the charging scheme that minimizes system cost for day-to-day dynamic choice (e.g., departure time) modelling problems. Though several studies have examined the problem of computing the link-specific congestion pricing toll to reach a desired equilibrium \citep{han2017discrete, liu2017doubly}, or to minimize the system cost \citep{tan2015dynamic}, it remains challenging to determine optimal time-varying charging schemes. \citep{Amirgholy2017215} presented both analytical and numerical approximations of the equilibrium solution under a time-varying road pricing scheme for a semi-quadratic \textit{accumulation-based MFD} and proved its accumulation maximization solution under limited heterogeneity conditions. \citet{liu2020bayesian} proposed to use a Gaussian (mixture) function to parameterize the time-varying road pricing scheme so as to facilitate the adoption of derivative-free optimization methods, such as evolution algorithms, pattern search, and Bayesian optimization. 

This paper incorporates the TCS and the trip-based MFD model into a day-to-day modelling framework to investigate the properties of the equilibrium solutions and the performance of the TCS at the network level. As proposed in \cite{trinity2020working}, the framework is then applied to the (departure-time choice) morning commute problem and used to evaluate welfare, network and traveler-specific performance changes. More specifically, the contributions of this paper are four-fold: (1) The day-to-day model is integrated with the trip-based MFD model under a TCS, wherein the credit charge rate is time-varying and length-specific, and the credit price is day-to-day dynamic. (2) The analytical properties of the proposed dynamic system are established, including the existence of the equilibrium solution and the uniqueness of the credit price at equilibrium state. (3) As the proposed day-to-day model is essentially a simulation with an expensive-to-evaluate objective function when dealing with a large number of travelers, Bayesian Optimization (BO) is adopted to optimize the credit charge scheme (which is referred to as toll profile) to maximize the social welfare. (4) A simulation framework is also presented to validate the analytical properties and evaluate the TCS performance against the case without TCS and the time-of-day pricing case.

The rest of this paper is organized as follows. Section \ref{S:2} presents the modeling assumptions, traffic flow model, day-to-day dynamic model, and credit price evolution model. Section \ref{S:3} discusses the properties of the dynamic system, i.e., the existence of the equilibrium solution and the uniqueness of the credit price at equilibrium state. Section \ref{S:4} introduces the BO approach to optimize the parameters associated with the toll profile function. Then the numerical simulation results are discussed in Section \ref{S:5}. Finally, conclusions are summarized in Section \ref{S:6}.

\section{Methodology}
\label{S:2}
\subsection{Characteristic analysis and modeling assumptions}

%Missing DEscription of the morning commute problem
In a morning commute problem, travelers are assumed to choose their departure times based on the travel time and schedule delay (the difference between the actual arrival time and desired arrival time) costs \citet{vickrey1969congestion}. In this paper, we follow the early philosophy in modelling the morning commute problem with area-based networks as a reservoir \citep{daganzo2007urban, gonzales2012morning} and investigate its properties under a TCS. Note that in this modelling framework, it is assumed that traffic congestion is spatially uniformly distributed within the network \citep{daganzo2007urban}.

%CLA
First, we present key definitions regarding the network, travelers' behavior and credit price, which are essential for the  modelling and analysis that follows.

\textit{Network}: A single-reservoir network \citep{daganzo2007urban,geroliminis2007macroscopic} is considered in this study, where all trips originate and end within this network. The idea is to describe the aggregate vehicular accumulation, the number of operating vehicles, at the "neighborhood" level with a well-defined relationship between the reservoir outflow and the aggregate accumulation. Furthermore, we will resort to the \textit{trip-based MFD} as in \cite{arnott2013bathtub, fosgerau2015congestion,daganzo2015distance, lamotte2016morning}, whose general properties are described in \ref{section:MFDmodel} and further investigated in \cite{mariotte2017macroscopic}. 

\textit{Traveler}: Each traveler (in a vehicle) is simulated as an individual agent. The initial travel plan of each agent consists of departure time, trip length and desired arrival time. In the day-to-day process, departure time can change within a fixed time-window, and a logit discrete choice model framework along the lines of \citep{ben1984dynamic} is used (see \ref{section:Choice}. The time-based attributes of the alternatives considered rely on the previous experienced attributes and the historical perception by each individual \citep{sheffi1984urban} as detailed in \ref{section:Learning}.

\textit{TCS}: The TCS explored in this paper follows the design proposed by \citep{bao2019regulating,brands2020tradable, trinity2020working}: 1) The regulator will give out the same amount of credits to every traveler in each day. 2) All travelers are assumed to trade directly with the regulator so that travelers who are short of credits can buy enough credits to pay the tariff, and travelers who have excess credits can sell them. 3) The tariff charging profile (i.e., the time-varying toll in credits) is designed by the regulator in advance and is invariant across the entire day-to-day process. Here, the optimal design of the toll profile is obtained using BO, further detailed in \ref{section:BO}.

\textit{Credit price}: As the credit is freely bought and sold in a market, the price is determined by credit demand-supply interactions \citep{yang2011managing,ye2013continuous}. Specifically, %\textcolor{red}{with the consideration of a fixed toll profile}, 
the supply of credits is predetermined by the regulator while the demand or consumption of credits is governed by the credit toll profile and the traffic flow generated from the executed travel plans of all the agents. The difference between the supply and demand determines the current credit price, which in turn influences the evaluations of agents (as detailed in Section \ref{section:Price}).

%CLA
Given the definitions above, the following assumptions are made:
% If we make these assumptions formal the way you did and not "en passant" then dont we need to discuss it? For example do we need to present justifications for it and not for alternative assumptions?

\textbf{Assumption 1.} The travel demand is assumed not to be excessively large that can trigger a gridlock. This assumption ensures that the gridlock is never reached and a non-zero flow stable equilibrium can be achieved.

\textbf{Assumption 2.} Each traveler has his/her own desired arrival time, trip length and schedule deviation early/late penalty coefficient \citep{lamotte2018morning}. This assumption ensures the heterogeneity within travelers, which can be accommodated by the trip-based MFD model as it computes the travelled distance of all travelers separately. Note that the size of the departure time choice set is the same for all travelers. A relaxation of this assumption will be considered in future work.

\textbf{Assumption 3.} The utility specification in the individual logit discrete choice model has a simple formulation with:  travel time, schedule deviation penalty, toll credit payment as attributes and the random utility term \citep{ben1984dynamic, ben1985discrete}.

\textbf{Assumption 4.} In the day-to-day process, when evaluating the utilities of the not-chosen alternatives (and thus not experienced), the attributes used are estimated by assuming that there are fictional travelers, who will not influence the accumulation of the network, entering the network at those alternative departure times (\citet{lamotte2015dynamic}).
%These traffic information should be available as the information and communications technologies have been greatly developed.
% RM: Now the travel time is estimated by using the experienced travel time of fictional travelers.%

\textbf{Assumption 5.} The buying and selling behaviors are not explicitly modelled in this study, in contrast with \citet{trinity2020working}. Thus, the heterogeneous preferences on the credit market or known behaviors existing in a tradable market such as explicit buys and selling and stocking are not considered. Here, for each day and after selecting a departure time, a traveler will have to pay a credit toll according to the toll profile function, her/his trip-length and the endowment. If the traveler is short of credits, she/he can only buy the credits needed for the payment directly from the regulator; otherwise, she/he will sell extra credits to the regulator at the market price. We are only assuming a credit balance of $0$ at the end of the day and we do not make an assumption on the explicit number of trades. This assumption is commonly adopted in previous research \citep{yang2011managing,wu2012design,brands2020tradable}. It is expected that complex and strategic buying and selling behaviors will affect credit price evolution \citep{dogterom2017tradable, trinity2020working} through both the credit value perception and the credit effective demand and supply. A common way to avoid the speculation is to set a specific validity period; then no one can benefit from credit stocking and banking behaviors \citep{de2020tradable}. In this study, the validity period is set as one day. Besides, the aforementioned end-of-day "traveler-to-regulator" trading has a main disadvantage that the budget neutrality of credits is not guaranteed \citep{brands2020tradable}. A possible solution could be setting a cap for the supply of credits from the regulator. Nevertheless, the proposed price adjustment mechanism in Section \ref{section:Price} can ensure market clearance of credits at the equilibrium, hence the budget neutrality will be guaranteed eventually, see Section \ref{S:5} for detailed results.

\textbf{Assumption 6.} To set up a time-varying credit charge scheme, the toll function is assumed to take the form of a (positive) Gaussian function, which is controlled by three parameters, mean, variance and amplitude. Without loss of generality, the method described below can be extended to a Gaussian mixture function to allow for asymmetric and more flexible toll profiles \citep{liu2020bayesian}. Moreover, alternative step toll and a triangular toll profiles are also tested under the proposed framework in Section \ref{S5_5_1}.

Under the above assumptions, the detailed model formulation that follows relies on the notation presented in Table \ref{notation}.

\begin{table}[H]
  \centering
  \caption{Modelling Notations}\label{notation}
  \vskip 0.2cm
  \begin{tabular}{ll}
\hline\noalign{\smallskip}
 Notations  & Definition \\
\noalign{\smallskip}\hline\noalign{\smallskip}
$i$ &  Index of a traveler\\
$d$ &  Index of a simulated day\\
$I_{i,d}$ & Number of credits distributed to traveler $i$ on day $d$\\
$D$ &  Total number of days\\
$N$ &  Total number of travelers\\
$L_i$ &  Trip length of traveler $i$\\
$w$ &  Scale factor of trip length\\
$\theta_i$ &  Value of time for traveler $i$\\
$SDE_i \backslash SDL_i$ &  Schedule deviation penalty for early$\backslash$late arrival for traveler $i$\\
$t_{i,d}^{dep}$ &  Departure time of traveler $i$ on day $d$\\
$\tau$ &  Time window size parameter\\
$\Delta t$ &  Step-size of departure time\\
$TW_i$ &  Departure time window for traveler $i$,\\ 
& $TW_i=\{t_{i,0}^{dep}-\tau\cdot\Delta t,t_{i,0}^{dep}-(\tau-1)\cdot\Delta t,\dots,t_{i,0}^{dep}+\tau\cdot\Delta t\}$\\
$n(t)$ &  Number of travelers in the network at time $t$\\
$V(n)$ &  Network's average travel speed with $n$ travelers\\
$T_i^*$ & Desired arrival time of traveler $i$\\
$T_{i,d}(t)$ & Experienced (or estimated) travel time for traveler $i$ on day $d$ \\
& departing at time $t$\\
%$tt_{i,d}(t)$ &   travel time for traveler $i$ on day $d$ departing at time %$t$\\
$c_{i,d}(t)$ & Experienced (or estimated) generalized cost for traveler $i$ on day $d$\\
& departing at time $t$\\
$C_{i,d}(t)$ &  Disutility associated with perceived generalized cost for traveler $i$\\
& on day $d$ departing at time $t$\\
$\delta_i$ & Binary variable, $\delta_i=1$ if $t+T_{i,d}(t)<T_i^*$, otherwise $\delta_i=0$\\
$\omega$ & Learning parameter for the generalized cost in the day-to-day process\\
$p_d$ &  Credit price on day $d$\\
$Toll(t)$ &  Number of credits charged per trip length unit at time $t$ \\
\hline
\end{tabular}
\end{table}

\subsection{Trip-based MFD model}\label{section:MFDmodel}
%\citet{lamotte2016morning} defined the "speed-MFD" that all travelers in the network move with mean speed $V(n)$, where the accumulation $n$ results from the departure and arrival event in the reservoir. 
As defined in \citep{lamotte2016morning,mariotte2017macroscopic}, the general principle of the \textit{trip-based MFD} is that the trip length of traveler $i$ is computed as the integration of the speed from the entering time $t_i^{dep}$ to the exiting time $t_i^{dep}+T_{i}(t_i^{dep})$, which is written as follows,
\begin{equation}
\label{trip_l}
L_i = \int_{t_i^{dep}}^{t_i^{dep}+T_{i}(t_i^{dep})} V(n(t))dt
\end{equation}

Without loss of generality, notation $t_i^{dep}$ is used instead of $t_{i,d}^{dep}$ in Equation (\ref{trip_l}). %This formulation follows \textbf{Assumption 1} that the desired arrival time, trip length and schedule deviation early/late penalty coefficient are considered as heterogeneous input parameters.
The assumption that the $V(n(t)))$ is the same for all travelers in the neighborhood-like network and only changes with an event (departure or arrival) often requires the use of event-based simulation for analysing network properties \citep{mariotte2017macroscopic,yildirimoglu2020demand}. In this paper we follow the simulation process below:
\begin{algorithm}[H]
\label{a21}
\caption{Event-based simulation of the trip-based MFD}
\emph{Step 1.} Initialization: Input $t_i^{dep}$, $T_i^*$, $L_i$, speed-MFD function $V(n)$ and number of travelers $N$; set $n=0$, event counter $j=0$, $t_j=0$; calculate the initially estimated arrival time for all travelers $\forall i , 1...N$ by $L_i/V(0)$.

\emph{Step 2.} Construct the event list by appending the departure and arrival in the order of time, which should have a length of $2N$.

\emph{Step 3.} Calculate the experienced travel time:

\textbf{While} Event list is not empty:

\quad set $j=j+1$

\quad set $t_j$ as the time of the next event
 
\quad let $L_i=L_i-V(n)\cdot (t_j-t_{j-1})$, $\forall i$
 
\quad \textbf{if} the next closest event is a traveler $i'$ departure:
 
\quad \quad $n=n+1$, update the credit account balance of traveler $i$
 
\quad \textbf{else}:
 
\quad \quad $n=n-1$, compute the experienced travel time of traveler $i'$, $T_{i'}(t_{i'}^{dep})$
 
\quad \textbf{end if}
 
\quad Remove this event from the event list
 
\quad Update the current average travelling speed $V(n)$
 
\quad Update the estimated arrival time for travelers currently in the network by equation \eqref{trip_l} considering a constant speed $V(n)$
 
\quad Sort the event list in the order of time
 
\textbf{End while}
\end{algorithm}

\subsection{Travel behavior model and and credit price evolution}
\subsubsection{Travelers' departure time choice}\label{section:Choice}
On day $d$, the utilities of the discrete departure time choice for traveler $i$ has the following formulation:
\begin{equation}\label{utility}
    U_{i,d}(t)= C_{i,d}(t)+ \epsilon_i
\end{equation}
where $\epsilon_i$ is an identically and independently distributed error term; and $C_{i,d}(t)$ is the systematic disutility associated with the perceived cost of traveler $i$ departing at time $t$ on day $d$, further detailed in Section \ref{lp}. The probability of choosing departure time $t$ can be calculated according to the logit model as follows:
\begin{equation}\label{prob}
    Pr_{i,d}(t)= \frac{\exp\big(\mu\cdot C_{i,d}(t)\big)}{\sum_{s\in TW_i}\exp\big(\mu\cdot C_{i,d}(s)\big)}
\end{equation}
where $\mu>0$ is the scale parameter, reflecting the variance of the unobserved portion of utility, with choices being random for scale parameter equal to zero and deterministic as its value approaches infinity \citep{sheffi1984urban, ben1985discrete}.

\subsubsection{Travelers' learning process}\label{section:Learning}
\label{lp}
Under a given TCS, the generalized cost for traveler $i$ on day $d$ departing at time $t$ consists of three parts: travel time cost, schedule deviation penalty and credit payment. More specifically, we formulate a money metric utility as:
\begin{equation}
\begin{aligned}
\label{exp_cost}
    c_{i,d}(t) =& -\theta_i\cdot \Big[T_{i,d}(t)+\delta_i\cdot SDE_i\cdot\big(T_i^*-t-T_{i,d}(t)\big)+\\
    &(1-\delta_i)\cdot SDL_i\cdot\big(t+T_{i,d}(t)-T_i^*\big)\Big]
    -p_d\cdot Toll(t)\cdot L_i\cdot w\\
    =& -\theta_i\cdot tc_{i,d}(t)-p_d\cdot Toll(t)\cdot L_i\cdot w
\end{aligned}
\end{equation}
% \begin{equation}\label{travel_time}
%     T_{i,d}(t)=\frac{tt_{i,d}(t_{i,d}^{dep})}{T_{i,d}^{ins}(t_{i,d}^{dep})}\cdot T_{i,d}^{ins}(t)
% \end{equation}
where $\theta_i$ is the value of time for traveler $i$, $T_{i,d}(t)$ is the travel time for traveler $i$ on day $d$ departing at time $t$, $SDE_i$ and $SDL_i$ are the schedule deviation penalty parameters for early and late arrival for traveler $i$, and $\delta_i$ is a binary variable that equals 1 if $i$ arrives early and 0 otherwise. For simplicity we also formulate $tc_{i,d}$ as the time related component of the utility, i.e.  the travel time cost and the schedule deviation penalty. Note that with \textbf{Algorithm 1}, only the travel time for the chosen departure time can be measured by a particular traveler $i$. In order to estimate the travel time for all other unchosen departure times in the choice set $TW_i$, fictional travelers are assumed to choose these departure time without influencing the accumulation of the network \citep{lamotte2015dynamic}. %The calculation of the travel time also follows equation (\ref{trip_l}) under \textbf{Assumption 3}. 
The credit payment of traveler $i$ is the product of credit price $p_d$, credit toll $Toll(t)$, trip length $L_i$ and the scale factor $w$, where $w$ is needed for scaling down the magnitude of trip length to avoid unrealistic large payment.

% \citet{yildirimoglu2020demand} proposed equation (\ref{travel_time}) as a correction for the estimation of travel time for all the departure time in the choice set $TW_i$. When $t=t_{i,d}^{dep}$, $T_{i,d}(t)$ equals to the experienced travel time $tt_{i,d}(t_{i,d}^{dep})$, i.e., $T_{i,d}(t)=T_{i,d}(t_{i,d}^{dep})=tt_{i,d}(t_{i,d}^{dep})$; otherwise, the estimated travel time $T_{i,d}(t)$ for traveler $i$ departing at other unchosen departure time in the choice set $TW_i$ is calculated based on the instantaneous travel time.
 
At the end of each day $d$, travelers update their perception of the generalized costs for day $d+1$, combining the initially perceived generalized costs on day $d$ with the experienced (chosen alternative) and estimated (unchosen alternatives) generalized costs on day $d$, as follows:
\begin{equation}\label{learning}
    C_{i,d+1}(t)=\omega\cdot C_{i,d}(t)+(1-\omega)\cdot c_{i,d}(t)
\end{equation}
where $0<\omega<1$ is the learning parameter. The learning rate implies the weight of previously measured cost \citep{horowitz1984stability}. A larger learning rate means a higher influence of historical perceived costs. In this study, all travelers are assumed to have the identical learning rate $\omega$.

\subsubsection{Credit price evolution}\label{section:Price}
Price in a free market is typically determined by the relationship between supply and demand, i.e. the market price should increase when demand exceeds supply, and reduce when supply exceeds demand. In this study, it is assumed that the credit price on day $d+1$ is based on the previous day's credit price $p_d$ and the observed excess credit consumption $Z_d$, defined as the difference between the credit consumption and the total endowment of credits in a given day. The endowment $I_{i,d}$ is assumed to be the same for all travelers and constant across days. Thus,
\begin{align}\label{price}
% &EMA_d=\omega\cdot EMA_{d-1}+(1-\omega)\cdot Z_d\\
&Z_d=\sum_i Toll(t_{i,d}^{dep})\cdot L_i\cdot w- I_{i,d}\cdot N\\
&p_{d+1}=p_{d}+Q(p_{d}, Z_d)
\end{align}
where the change of price is represented by function $Q$.

Function $Q(p,Z)$ needs to satisfy the following assumptions to guarantee a non-negative price $p$: 1) $\forall p>0$, $Q(p,Z)$ is strictly increasing with $Z\in\mathbb{R}$; 2) If $p=0$, $Q(0,Z)$ is strictly increasing with $Z\geq 0$; 3) $Q(p,0)=0,~\forall p\geq 0$; $Q(0,Z)=0,~\forall Z\leq 0$. Then, we have
\begin{equation}\label{funQ}
    Q(p,Z)=0\Longleftrightarrow p\cdot Z=0,\quad p\geq 0,~Z\leq 0
\end{equation}

\section{Solution analysis}
\label{S:3}
This section builds on the work of \citet{ye2013continuous}, who showed that the equilibrium of market price and flow dynamics under a TCS in a route choice context is unique and stable.

%Furthermore, all analytical properties here analyzed rely on the assumption of continuous formulation. Then in Section \ref{S:5} these properties are validated by numerical simulations.

\subsection{Solutions to the day-to-day model}
\label{S31}
When the proposed day-to-day model reaches equilibrium, the total number of consumed credits should not exceed the total number of endowed credits, which can be written as follows:
\begin{equation}\label{feasibility}
    \lim_{d\rightarrow D} \sum_i Toll(t_{i,d}^{dep})\cdot L_i\cdot w\leq I\cdot N
\end{equation}
where $I_{i,d}$ is simplified as $I$ as it is identical among travelers and constant across days.

As the credit toll profile within the peak hour period is predetermined and constant across days, it allows for the calculation of the theoretical minimum consumption of credits for all travelers. In this case, the dynamic system will be self-adaptive to the credit endowment, as long as the feasibility condition that credit endowment is not smaller than the minimum demand is satisfied.

\subsection{Existence of the equilibrium point and uniqueness of the price}
\label{property}
For the proposed dynamic system, the equilibrium point $C_{i,*}(t)$ means the vector of the perceived generalized cost of all travelers $\bm{C}_{d}$ is equal to the vector of the experienced generalized cost of all travelers $\bm{c}_{d}$, i.e., when the system comes to $C_{i,*}(t)$, it will remain at $C_{i,*}(t)$ for all future times. By the definition, for all traveler $i\in\{1,2,\dots,N\}$, the equilibrium condition for this system is:
\begin{equation}\label{equilibrium1}
\left\{
\begin{aligned}
     &C_{i,*}(t)=\omega C_{i,*}(t)+(1-\omega)\cdot [\theta_i\cdot tc_{i,*}(t)+p_*\cdot Toll(t)\cdot L_i\cdot w]\\
     &p_{*}=p_{*}+Q(p_{*}, Z_*)\\
\end{aligned}
\right.
\end{equation}
which is equivalent to 
\begin{equation}\label{equilibrium2}
\left\{
\begin{aligned}
     &C_{i,*}(t)=\theta_i\cdot tc_{i,*}(t)+p_*\cdot Toll(t)\cdot L_i\cdot w\\
     &Q(p_{*}, Z_*)=0\\
\end{aligned}
\right.
\end{equation}

We can then introduce the following two theorems for the existence of the equilibrium point and the uniqueness of the price.

\begin{theorem}\label{thm1}
If \textbf{Assumption 1} holds, and if $I\cdot N>I_{\min}\cdot N\triangleq \lim_{p\rightarrow \infty} \sum_iToll(t_i)\cdot L_i\cdot w$, then there exists at least one equilibrium point $(C_{i,*},p_*)$ of the proposed dynamic system.
\end{theorem}

\begin{proof1}
The proof is detailed in \ref{P_thm1}. It is directly inspired from a proof of \citet{ye2013continuous}. \qed
\end{proof1}

\begin{theorem}\label{thm2}
Assume the minimum credit endowment condition is satisfied, if the total credit demand, $\sum_i Toll(t_{i,*}^{dep})\cdot L_i\cdot w$, is strictly decreasing with credit price, i.e.,
\begin{equation}\label{p_condition}
    (p_1-p_2)\Big(\sum_i Toll(t_{i,*}^{dep_1})\cdot L_i\cdot w-\sum_i Toll(t_{i,*}^{dep_2})\cdot L_i\cdot w\Big)<0
\end{equation}
where $t_{i,*}^{dep_1}$ and $t_{i,*}^{dep_2}$ is the departure time of traveler $i$ under price $p_1$ and $p_2$, respectively, then the equilibrium price is unique.
\end{theorem}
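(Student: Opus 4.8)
The plan is to prove uniqueness by contradiction, leveraging the complementarity structure that the price-adjustment map $Q$ imposes at equilibrium together with the strict-monotonicity hypothesis \eqref{p_condition}; no continuity or fixed-point machinery should be required, since that is reserved for the existence result of Theorem \ref{thm1}. First I would translate the second equilibrium condition in \eqref{equilibrium2} into a linear-complementarity statement. Writing $Z_* = \sum_i Toll(t_{i,*}^{dep})\cdot L_i\cdot w - I\cdot N$ for the excess credit consumption at an equilibrium, the condition $Q(p_*,Z_*)=0$ is, by the equivalence \eqref{funQ}, exactly $p_*\geq 0$, $Z_*\leq 0$, and $p_*\cdot Z_*=0$. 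Read behaviourally, this says that at any equilibrium the market never over-consumes credits ($Z_*\leq 0$), and that whenever the price is strictly positive the market must clear exactly ($Z_*=0$).

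Next I would suppose, for contradiction, that two equilibria carry distinct prices, labelled so that $p_1>p_2\geq 0$. Because $p_1>p_2\geq 0$ forces $p_1>0$, the complementarity condition at the first equilibrium gives $Z_1=0$, i.e. the aggregate equilibrium credit demand at $p_1$ equals the endowment, $\sum_i Toll(t_{i,*}^{dep_1})\cdot L_i\cdot w = I\cdot N$. I would then apply \eqref{p_condition}: since $p_1>p_2$, the aggregate credit demand is strictly larger at the lower price, so $\sum_i Toll(t_{i,*}^{dep_2})\cdot L_i\cdot w > \sum_i Toll(t_{i,*}^{dep_1})\cdot L_i\cdot w = I\cdot N$, which yields $Z_2>0$. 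This contradicts the equilibrium requirement $Z_2\leq 0$ established in the previous step, and therefore $p_1=p_2$.

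I would close with a remark on scope and on where the content of the argument really lies. The theorem asserts uniqueness of the price only, not of the departure-time profile; this matches the credit-pricing literature \citep{yang2011managing,ye2013continuous}, where the market price is pinned down even when the underlying flow pattern need not be, and indeed the monotonicity hypothesis \eqref{p_condition} is stated directly in terms of the two candidate equilibrium profiles rather than requiring their uniqueness. The step I expect to carry the weight is not any computation but the correct reading of the defining properties of $Q$ into the complementarity conditions $Z_*\leq 0$ and $p_*\cdot Z_*=0$; once these are in hand, strict monotonicity of aggregate demand in price rules out two distinct equilibrium prices through a single sign comparison, so the remainder of the argument is immediate.
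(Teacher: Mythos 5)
Your proof is correct and takes essentially the same route as the paper's: both reduce the equilibrium condition $Q(p_*,Z_*)=0$ via \eqref{funQ} to the complementarity statements $p_*\geq 0$, $Z_*\leq 0$, $p_* Z_*=0$ at each candidate equilibrium, and then derive a contradiction with the strict monotonicity hypothesis \eqref{p_condition}. The only difference is bookkeeping: the paper expands $(p_1-p_2)\bigl(\sum_i Toll(t_{i,p_1}^{dep})\cdot L_i\cdot w-\sum_i Toll(t_{i,p_2}^{dep})\cdot L_i\cdot w\bigr)$ symmetrically to show it is nonnegative, whereas you order the prices and note that the market must clear exactly at the higher price, forcing infeasible excess demand at the lower one.
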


\begin{proof2}
The proof is detailed in \ref{P_thm2}. \qed
\end{proof2}

\begin{remark}\label{rm1}
Notes on the conditions for inequality (\ref{p_condition}) to hold:

1) There is at least two departure times with different credit charges. This condition holds as the credit toll is time-varying.

2) According to equation (\ref{prob}), for each traveler $i$, the probability of choosing departure time $t$, $Pr_{i}(t)=Pr_{i}(C_{i}(t))$ takes the logit form, thus the following conditions are satisfied:
\begin{tabenum}[i)]
\qquad\qquad \tabenumitem $ \displaystyle \frac{\partial Pr_{i}(C_{i}(t))}{\partial C_{i}(t)} > 0 \quad \forall t\in TW_i,~i\in \{1,2,\dots,N\}$\\
\qquad\qquad \tabenumitem $ \displaystyle \frac{\partial Pr_{i}(C_{i}(t_1))}{\partial C_{i}(t_2)} < 0 \quad \forall t_1,~t_2\in TW_i,~i\in \{1,2,\dots,N\}, t_1\neq t_2 $\\
\qquad\qquad \tabenumitem $ \displaystyle \frac{\partial Pr_{i}(C_{i}(t_1))}{\partial C_{j}(t_2)} = 0 \quad \forall t_1\in TW_i,~t_2\in TW_j,~i\neq j\in \{1,2,\dots,N\} $\\
\qquad\qquad \tabenumitem $ \displaystyle \frac{\partial Pr_{i}(C_{i}(t_1))}{\partial C_{i}(t_2)} = \frac{\partial Pr_{i}(C_{i}(t_2))}{\partial C_{i}(t_1)} \quad \forall t_1,~t_2\in TW_i,~i\in \{1,2,\dots,N\} $\\
\qquad\qquad \tabenumitem $ \displaystyle \sum_{t'\in TW_i}Pr_{i}(C_{i}(t'))=1, \quad \forall i\in \{1,2,\dots,N\} $
\end{tabenum}

Condition i) shows the partial derivative of $Pr_{i}(C_{i}(t))$ with regard to $C_{i}(t)$ is positive, which means that the probability of choosing departure time $t$ becomes higher if the $C_{i}(t)$ increases (perceived cost decreases). Condition ii) means that for traveler $i$ the probability of choosing departure time $t_1$ becomes lower if for another time $t_2$, $C_{i}(t_2)$ increases (perceived cost of $t_2$ decreases). Condition iii) represents the fact that the decision-making process of traveler $i$ is not affected by other travelers. In addition, it is trivial to derive that $\frac{\partial Pr_{i}(C_{i}(t_1))}{\partial C_{i}(t_2)}= Pr_{i}(C_{i}(t_1))\cdot Pr_{i}(C_{i}(t_2)), t_1\neq t_2$, then condition iv) holds. Finally, the probabilities of all alternatives should sum up to 1, as shown in condition v).

3) Let $\bm{J_{Pr}}$ be the Jacobian matrix of $\bm{Pr}(\cdot)$ with regard to $\bm{C_i}(\bm{t})=(C_{i}(t'),~t'\in TW_i)$. It can be proved that $\bm{J_{Pr}}$ is negative semidefinite as condition 2) is satisfied and the systematic utility is a non-increasing function of the corresponding perceived cost (see details in \citep{cantarella1995dynamic,ye2013continuous}). According to the property of semidefinite matrix, we have $\bm{Toll}\cdot \bm{J_{Pr}}\cdot \bm{Toll}^T\leq 0$, where $\bm{Toll}=(L_i\cdot w\cdot Toll(t'),~t'\in TW_i)$. The equality holds if and only if condition 1) is not satisfied. Thus the following inequality holds:
\begin{equation}
\begin{aligned}
\label{p_condition2}
    & (p_1-p_2)\Big(\bm{Pr}(\bm{C_i}(\bm{t}|p_1))\bm{Toll}^T-\bm{Pr}(\bm{C_i}(\bm{t}|p_2))\bm{Toll}^T\Big)\\
    =& (p_1-p_2)\Big(\sum_{t\in TW_i} Pr_i(C_i(t|p_1))\cdot Toll(t)\cdot L_i\cdot w-\\
    &\sum_{t\in TW_i} Pr_i(C_i(t|p_2))\cdot Toll(t)\cdot L_i\cdot w\Big)\\
    < & 0
\end{aligned}
\end{equation}
In inequality (\ref{p_condition2}), $\sum_{t\in TW_i} Pr_i(C_i(t|p_1))\cdot Toll(t)\cdot L_i\cdot w$ is the expected value of the credit toll paid by traveler $i$ with credit price $p_1$. According to the weak law of large numbers, we can substitute the summation term by the credit toll paid by traveler $i$ departing at the equilibrium departure time $t_{i,*}^{dep_1}$ with credit price $p_1$. Then inequality (\ref{p_condition}) holds. \qed
\end{remark}

Inequality (\ref{p_condition}) can also be interpreted that under the logit form choice probability function, when the credit price goes up, the departure time that associates with a relative high toll is less likely to be chosen and the credit consumption is supposed to decrease.

Then, it is reasonable to propose the following hypothesis:
\begin{hyp}\label{hyp1}
$p_*$ is decreasing with $I_i$ in interval $(I_{\min},~I_{UE}]$, where $I_{\min}$ is defined in Section \ref{S31}, and $I_{UE}$ is the average credit consumption for equilibrium pattern without TCS. Besides, $p_*=0$ when $I_i\geq I_{UE}$. 
\end{hyp}

A similar hypothesis is analytically proved in \citep{ye2013continuous}, where the TCS is applied to a link-based network congestion model, instead, we will validate this hypothesis by simulations in Section \ref{S:5}.
% \subsection{Stability of the equilibrium point}

\section{Simulation-based optimization framework}
\label{S:4}
\subsection{Framework}
\label{S_4_1}
In this study, the social welfare per capita $W$ at the equilibrium state is adopted to measure the performance of scenarios with and without TCS. First, in the no toll case (or NTE, the scenario without TCS), the social welfare per capita $W_{NTE}$ is the travel consumer surplus (CS) for each traveler, i.e., the sum of observed travel utilities $U_{i,d}(t_{i,d}^{dep})$, including travel time cost, schedule deviation penalty and random utility, which is written as
\begin{equation}
\begin{aligned}
\label{sw1}
    W_{NTE} =& CS=\frac{1}{N}\sum_{i=1}^N U_{i,d}(t_{i,d}^{dep})\\
    =&\frac{1}{N}\sum_{i=1}^N\Big[-\theta_i\cdot \Big(T_{i,d}(t_{i,d}^{dep})+\delta_i\cdot SDE_i\cdot\big(T_i^*-t_{i,d}^{dep}-tt_{i,d}(t_{i,d}^{dep})\big)+\\
    &(1-\delta_i)\cdot\big(t_{i,d}^{dep}+tt_{i,d}(t_{i,d}^{dep})-T_i^*\big)\Big)+\epsilon_i(t_{i,d}^{dep})\Big]\\
    =& \frac{1}{N}\sum_{i=1}^N\Big[-\theta_i\cdot tc_{i,d}(t_{i,d}^{dep})+\epsilon_i(t_{i,d}^{dep})\Big]\\
    =& \frac{1}{N}\sum_{i=1}^N U'_{i,d}(t_{i,d}^{dep})
\end{aligned}
\end{equation}
where $d$ is the day when the system reaches the equilibrium. 

For the TCS scenario, the social welfare is computed as the sum of the CS, the travelers' revenue (TR) from unused credits, the regulator revenue (RR) from credit toll collection, the regulator cost (RC) from buying unused credits and the value of the travel endowment (TE). Let $\phi_{i,d}(t_{i,d}^{dep})$ represent the number of credits sold by traveler $i$ on day $d$, $\psi_{i,d}(t_{i,d}^{dep})$ denote the number of credits bought by traveler $i$ on day $d$, and $Toll_{i,d}^e(t_{i,d}^{dep})$ denote the number of credits paid by traveler $i$ on day $d$ from the endowment. Then we can compute the above social welfare components: $TR=\sum_{i=1}^N\phi_{i,d}(t_{i,d}^{dep})\cdot p_d$,  $RR=\sum_{i=1}^N\psi_{i,d}(t_{i,d}^{dep})\cdot p_d$, $RC=\sum_{i=1}^N\phi_{i,d}(t_{i,d}^{dep})\cdot p_d$, $TE=\sum_{i=1}^N Toll_{i,d}^e(t_{i,d}^{dep})\cdot p_d$. In addition, the paid credits by traveler $i$ can be considered as the sum of two components, the credits from endowment and credits bought from the regulator, i.e., $Toll(t)\cdot L_i\cdot w=\psi_{i,d}(t_{i,d}^{dep})+Toll_{i,d}^e(t_{i,d}^{dep})$. Based on the considerations above, the social welfare per capita $W_{TCS}$ is calculated as follows,
\begin{equation}
\begin{aligned}
\label{sw2}
    W_{TCS} =& CS+TR+RR-RC+TE\\
    =& \frac{1}{N}\sum_{i=1}^N\Big[-\theta_i\cdot tc_{i,d}(t_{i,d}^{dep})-p_d\cdot Toll(t_{i,d}^{dep})\cdot L_i\cdot w+\epsilon_i(t_{i,d}^{dep})\Big]+\\
    & \frac{1}{N}\Big[\sum_{i=1}^N\phi_{i,d}(t_{i,d}^{dep})\cdot p_d+ \sum_{i=1}^N\psi_{i,d}(t_{i,d}^{dep})\cdot p_d-\sum_{i=1}^N\phi_{i,d}(t_{i,d}^{dep})\cdot p_d+\\
    & \sum_{i=1}^N Toll_{i,d}^e(t_{i,d}^{dep})\cdot p_d\Big]\\
    =& \frac{1}{N}\sum_{i=1}^N\Big[-\theta_i\cdot tc_{i,d}(t_{i,d}^{dep})+\epsilon_i(t_{i,d}^{dep})\Big]\\
    =& \frac{1}{N}\sum_{i=1}^N U'_{i,d}(t_{i,d}^{dep})
\end{aligned}
\end{equation}
Note that the TCS welfare measure is equivalent to the NTE case and equals the combination of travel time cost and schedule deviation penalty plus the random utility component. %This ensures the validity of the comparisons between the no toll case and TCS case.

Our simulation captures the system model presented in Section \ref{S:2} including all detailed traveler and regulator states in the single-reservoir network under the designed market conditions\footnote{\label{note1}The open source code for the simulation is available at \url{https://github.com/RM-Liu/MFD_TCS}.}.
A simulation-based optimization was then conducted to find the optimal credit toll scheme which leads to the maximum social welfare $W_{TCS}$. Note that gradient-based algorithms will be computationally expensive for our optimization problem because they involve numerical computation of derivatives for an objective that has no closed form and is the outcome of a complex simulation. Furthermore, the simulation framework can be a time-consuming process under large number of scenarios. For this reason, the Bayesian optimization (BO) approach proposed in \citep{liu2020bayesian} for time-varying road pricing mechanisms is here adopted as it proved to approximate the simulation-based objective function using few evaluations. Furthermore, a Gaussian function is adopted to parameterize the credit toll profile and decrease the number of decision variables.

\subsection{Bayesian optimization}\label{section:BO}
A BO framework essentially consists of two main steps \citep{frazier2018tutorial}: Update a Bayesian statistical model that approximates a complex map from the input (i.e., the toll profile parameters: mean, variance and amplitude) to the output (i.e., the the social welfare $W_{TCS}$); determine the next input by optimizing an acquisition function. These two steps are discussed in details in Section \ref{GP} and \ref{Acq}, respectively.

\subsubsection{Gaussian Process}
\label{GP}
Invariably, a Gaussian Process (GP), which in our case assumes the objective function values and input points are joint distributed, is adopted in BO. Then the GP is fully specified by its mean function $\mu(\bm{x})$ and covariance function $k(\bm{x},\bm{x'})$ as follows,
\begin{equation}\label{GP_dist}
    W(\bm{x})~\sim~\mathcal{GP}\Big(\mu(\bm{x}),~k(\bm{x},\bm{x'})\Big)
\end{equation}
where $\bm{x}$ represents the input point, which is the credit toll scheme parameters.

To simplify the training and prediction of a GP model, the mean function $\mu(\cdot)=0$ is usually adopted \citep{williams2006gaussian}. Let us assume we have $m$ evaluated objective values according to a space-filling design of experiment as $\mathcal{D}_m=\{\bm{x}_{1:m},W_{1:m}\}$, where $\bm{x}_{1:m}=[\bm{x}_1,\bm{x}_2,\dots,\bm{x}_m]^T$ are the input points and $W_{1:m}=[W_1,W_2,\dots,W_m]^T$ are the corresponding objective values. Then the joint distribution of $W_{1:m}$ and the inference at a candidate input point $\bm{x}_{m+1}$, $W_{m+1}$, is as follows:
\begin{equation}\label{joint_dist}
\left[\begin{array}{c}
W_{1:m}\\
W_{m+1}
\end{array} 
\right]~\sim~\mathcal{N}\Bigg(\bm{0},~\left[\begin{array}{cc}
\bm{K} & \bm{k}\\
\bm{k}^T & k(x_{m+1},x_{m+1})
\end{array} 
\right]\Bigg)
\end{equation}
where $\bm{k}=[k(\bm{x}_{m+1},\bm{x}_1),k(\bm{x}_{m+1},\bm{x}_2),\dots,k(\bm{x}_{m+1},\bm{x}_m)]^T$, and $\bm{K}$ is the covariance matrix with entries $\bm{K}_{i,j}=k(\bm{x}_{i},\bm{x}_j)$ for $i,~j\in \{1,2,\dots,m\}$.

The posterior distribution of $W_{m+1}$ can be computed using Bayes' theorem,
\begin{equation}\label{posterior}
    W_{m+1}|W_{1:m}~\sim~\mathcal{N}\Big(\mu(\bm{x}_{m+1}),~\sigma^2(\bm{x}_{m+1})\Big)
\end{equation}
where $\mu(\bm{x}_{m+1})=\bm{k}^T\bm{K}^{-1}W_{1:m}$ and $\sigma^2(\bm{x}_{m+1})=k(x_{m+1},x_{m+1})-\bm{k}^T\bm{K}^{-1}\bm{k}$.

The covariance function is used to measure the correlation between two input points, that points closer in the input domain have a larger positive correlation and have more similar objective values \citep{frazier2018tutorial}. In this study, we choose the commonly used Matern kernel \citep{matern2013spatial},
\begin{equation}\label{kernel}
    k(\bm{x},\bm{x'})=\frac{2^{1-\nu}}{\Gamma(\nu)}(\sqrt{2\nu}\parallel\bm{x}-\bm{x'}\parallel)^{\nu}H_{\nu}(\sqrt{2\nu}\parallel\bm{x}-\bm{x'}\parallel)
\end{equation}
where $\Gamma(\cdot)$ is the Gamma function and $H_{\nu}$ is the modified Bessel function. In the following numerical examples, $\nu=5/2$ is used for the optimization process.

\subsubsection{Acquisition function}
\label{Acq}
After updating the posterior distribution over $W$, based on which the acquisition function measures the value of the candidate input points by using the inferred corresponding objective value and variance of the prediction.

Among the most popular acquisition functions, the upper confidence bound (UCB) \citep{srinivas2009gaussian} is used in this study, which is written as follows:
\begin{equation}\label{ucb}
    \alpha_{UCB}(\bm{x};\beta)=\mu(\bm{x})+\beta\sigma(\bm{x})
\end{equation}
where $\beta$ is a hyperparameter which controls balance between exploration and exploitation, that a larger $\beta$ will lead to more exploration. Then the next point to be evaluated can be determined by maximizing the UCB function (\ref{ucb}),
\begin{equation}\label{ucb_suggest}
    \bm{x}_{m+1}=\arg\max \alpha_{UCB}(\bm{x}_{m+1})
\end{equation}

\subsection{Experiment Design}
When conducting the BO from scratch, the initial input point is usually randomly generated from the input space, which may influence the solution quality and optimization efficiency. Then a space-filling design of the experiment could be useful for providing a good initial input point.

In this study, one of the most popular sampling method, Latin Hypercube Sampling (LHS) \citep{mckay2000comparison}, is used to generate the initial sample points. LHS stratifies each variable of $\bm{x}$ into $m$ equal intervals, and draws sample points from each sub-intervals uniformly. Compared to the Monte Carlo method, LHS has the advantage that the sampled points are independent without overlap, which provides a representative of the real variability.

\section{Numerical Experiments}
\label{S:5}
After presenting the simulation setting in the following subsection, we present here the results of (1) the day-to-day model properties and convergence (Section \ref{S5_2}) under a given credit toll profile; (2) the comparison between the optimized TCS against the NTE (\ref{S5_3}); (3) the comparison between the optimized TCS and time-of-day pricing (or CP) (\ref{S5_4}); and finally (4) the comparison among alternative types of credit toll profiles and credit schemes (\ref{S5_5}).

\subsection{Experiment settings}\label{S5_1}
The settings are presented in Table \ref{settings}.

\begin{table}[H]
  \centering
  \caption{Numerical settings}\label{settings}
  \vskip 0.2cm
  \begin{tabular}{ll}
\hline\noalign{\smallskip}
Parameters  & Specification \\
\noalign{\smallskip}\hline\noalign{\smallskip}
Credit endowment & $I_{i,d}=5$ [credit]\\
Demand &  $N_1=3000$ [traveler], $N_2=3700$ [traveler]\\
Initial departure time & $t_{i,0}^{dep}=\mathcal{N}(80,18)$, $t_{i,0}^{dep} \in [20,150]$\\
Trip length &  $L_i=4600+\mathcal{N}(0,(0.2*4600)^2)$ [m], $L_i>0$\\
Scale factor of trip length & $w=2\times 10^{-4}$ \\
Value of time & $\theta_i=1.1$ [DKK/min] \citep{fosgerau2007danish}\\
Schedule deviation penalty & $\begin{bmatrix} SDE_i \\ SDL_i \end{bmatrix}=\begin{bmatrix} 0.5\\ 4 \end{bmatrix}+\mathcal{N}\Bigg(\begin{bmatrix} 0.05^2 & 0.1^2\\ 0.1^2 & 0.4^2 \end{bmatrix}\Bigg)$\\
& $SDE_i \in [0.3,0.7]$, $SDL_i \in [2.5,5.5]$\\
Time window parameter & $\tau=30$ \\
Departure time interval & $\Delta t=1$ [min]\\
Network capacity & $n_{jam}=4500$ [vehicle] \\
Free flow speed & $v_f=9.78$ [m/s] \\
Speed function & $V(n)=v_f(1-\frac{n}{n_{jam}})^2$ [m/s]\\
Learning parameter &  $\omega=0.7$\\
Function $Q(p,Z)$ & $Q(p,Z)=kZ$, if $p>0$, $Q(p,Z)=\max\{0,kZ\}$\\
& if $p=0$, where $k=2\times 10^{-4}$\\
Toll profile function & $Toll(t|A, \xi, \sigma)=A\times e^{\frac{-(t-\xi)^2}{2\sigma^2}}$\\
\hline
\end{tabular}
\end{table}

This experiment considers a single-reservoir network with a capacity of 4500 travelers, with the speed function adopted from \citet{lamotte2018morning} and other parameters used in \citet{yildirimoglu2020demand}. The MFD is also characterized by the critical value of the accumulation, that can be computed according to the adopted speed function, i.e., $n_{cr}=1500$ travelers. Two demand scenarios, moderate congestion ($N_1=3700$ travelers) and high congestion ($N_2=4500$ travelers), are considered, where the critical value of the accumulation $n_{cr}$ is exceeded in both scenarios. The profiles for the accumulation of the two scenarios are shown in Section \ref{S5_2}. The initial departure time $t_{i,0}^{dep}$ is generated from a truncated Gaussian distribution. The desired arrival time $T_i^*$ is then computed as $t_{i,0}^{dep}+L_i/v_f$ for all travelers, which is also normally distributed.

Additionally, heterogeneous travelers are captured by drawing their trip lengths and schedule deviation penalties from three truncated Gaussian distributions, respectively. In both demand scenarios, the same distributions are used while all other parameters are constant (see Table \ref{settings}).

\subsection{Day-to-day evolution process}
\label{S5_2}
\subsubsection{Day-to-day process without TCS}
\label{without_TCS}
In this subsection, we first focus on the equilibrium of the day-to-day dynamics without TCS in both the moderate congestion scenario and high congestion scenario. When the day-to-day evolution reaches an equilibrium, the vector of the perceived generalized cost of all travelers $\bm{C}_{d}$ is equal to the vector of the experienced generalized cost of all travelers $\bm{c}_{d}$. Then the inconsistency between $\bm{C}_{d}$ and $\bm{c}_{d}$ is used as the indicator of equilibrium. Specifically, the L1 norm of the difference between them divided by the number of travelers $N$, i.e., $\parallel \bm{C}_{d}-\bm{c}_{d}\parallel_1/N$, is computed to represent the inconsistency. In addition, the gap of this generalized value is computed by $\parallel \bm{C}_{d}-\bm{c}_{d}\parallel_1/\parallel \bm{C}_{d}\parallel_1\times 100\%$.

Figure \ref{NTE3700}(a) presents the convergence of the perceived generalized cost of all travelers $\bm{C}_{d}$. It is found that the inconsistency becomes stable and close to 0 after 20 days, with a gap around 0.03\%, which implies that the day-to-day evolution reaches an equilibrium state. Figure \ref{NTE3700}(b) illustrates the evolution process of the average travel consumer surplus (\textit{CS}) per capita across days, and Figure \ref{NTE3700}(c) shows the evolution process of the social welfare per capita. These two plots show a same curve since the social welfare equals the consumer surplus in the no toll case. In addition, Figure \ref{NTE3700}(d) plots the evolution process of the average travel time cost across days. Besides, Figure \ref{NTE3700}(e) demonstrates the departure rates for every 5-minute interval on different days, and Figure \ref{NTE3700}(f) depicts the states of accumulation on different days, where the accumulation on day 20 overlaps with that of day 49. This also testifies that the equilibrium state is reached. In these two plots, the curves of day 0 represent the initial generated state specified in Section \ref{S5_1}, thus for the base case, it is not necessary to compare the equilibrium state with the initial state. Note that, as travelers do not have perceived cost and use predetermined departure time on day 0, the inconsistency, consumer surplus, social welfare and travel time cost are computed from day 1.

\begin{figure}[H]
    \centering
    \includegraphics[width=1\textwidth]{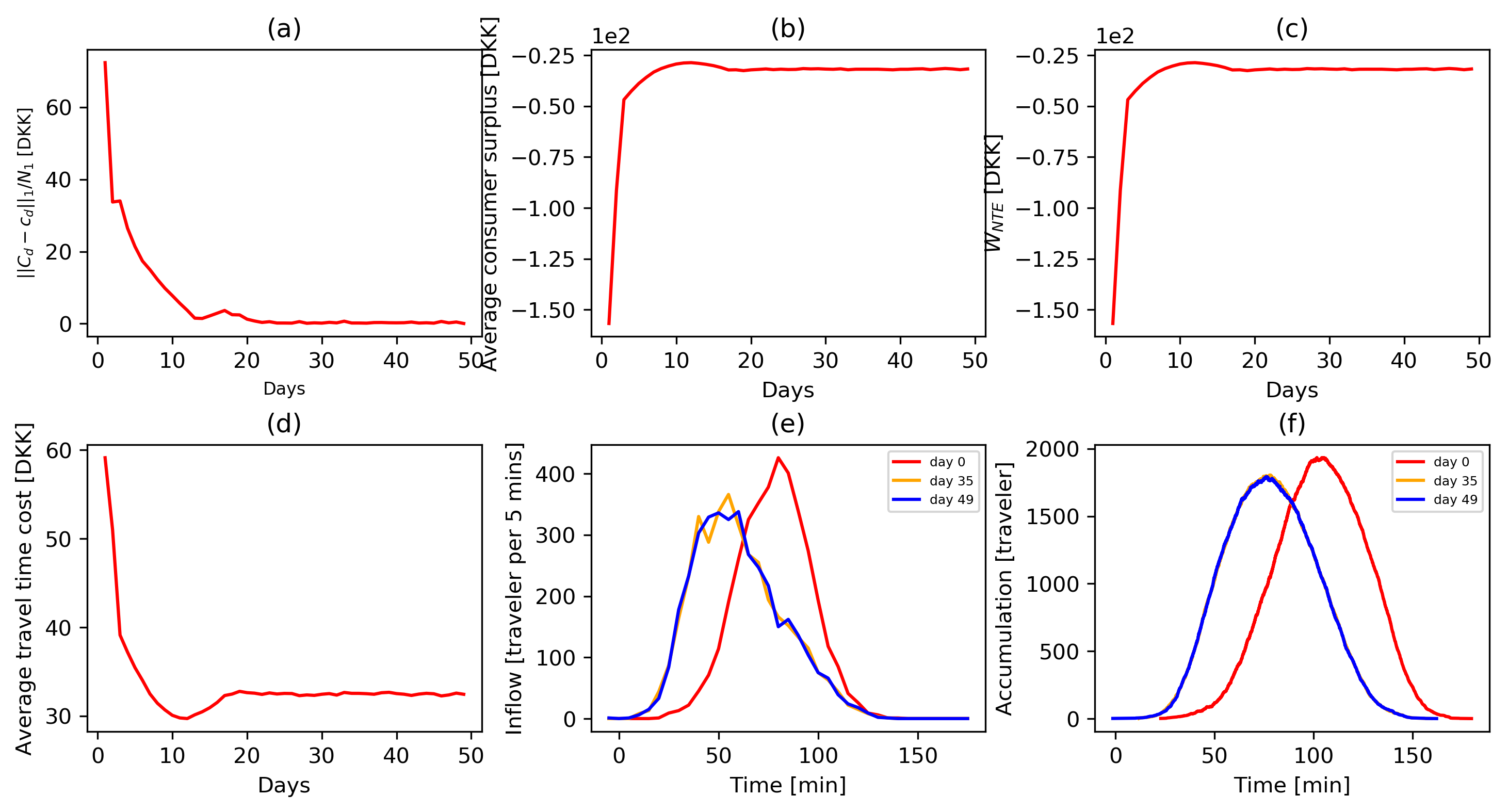}
    \caption{The evolution process of moderate congestion scenario without TCS}
    \label{NTE3700}
\end{figure}

Moreover, we further investigate convergence properties for the high congestion scenario. Figure \ref{NTE4500}(a) demonstrates that with the same learning parameter $\omega=0.7$, the perceived generalized cost is also able to reach a stable state. Figure \ref{NTE4500}(b-d) present the evolution of the average consumer surplus, social welfare per capita and average travel time cost, respectively. By comparison, we observe a lower consumer surplus, lower social welfare and higher travel time cost at the equilibrium state due to severer congestion. In addition, Figure \ref{NTE4500}(e) shows the departure rates, where there is a higher peak compared to the moderate congestion scenario, and Figure\ref{NTE4500}(f) shows the changes of accumulation on different days, where the peak accumulation at the equilibrium state exceeds the one in Figure\ref{NTE3700}(f) and is much higher than the critical value $n_{cr}$.

\begin{figure}[H]
    \centering
    \includegraphics[width=1\textwidth]{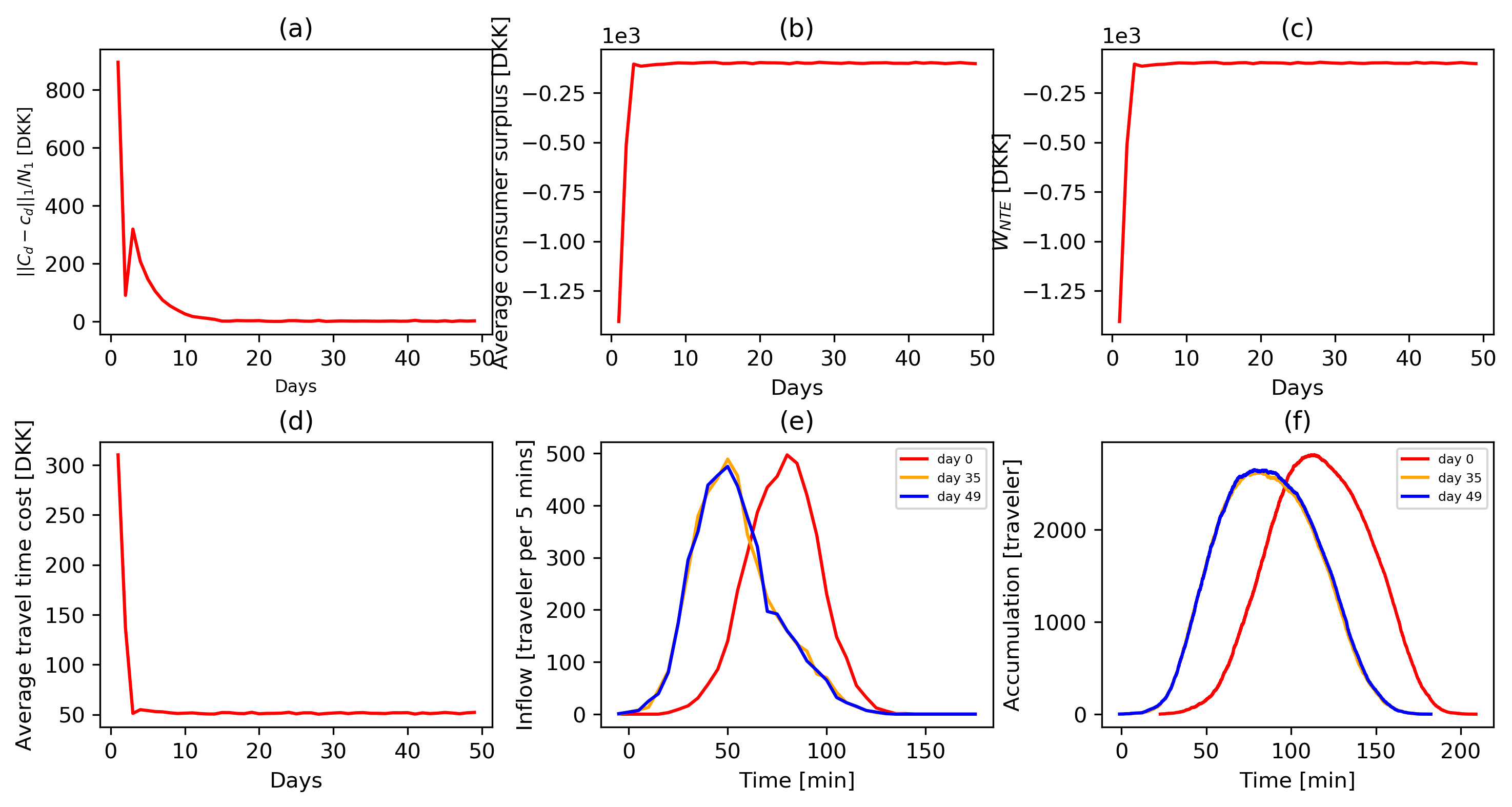}
    \caption{The evolution process of high congestion scenario without TCS}
    \label{NTE4500}
\end{figure}

\subsubsection{Day-to-day evolution with TCS}
\label{with_TCS}
In this subsection, we will present the convergence and the equilibrium properties mentioned in Section \ref{property} of the day-to-day dynamics with TCS for the moderate congestion scenario and high congestion scenario. It is worth to note that the equilibrium states of the base cases are used as the starting states (i.e., day 0) of the TCS cases for both demand scenarios. The parameters of the toll profile function are set as: $A=11$, $\xi=18$ and $\sigma=80$.

According to Figure \ref{TCS4500}(a)-(d), it is found that the day-to-day evolution under a given TCS also converges to an acceptable degree, where the system reaches the equilibrium after 20 days, with the gap equals to 0.4\%. Note that the social welfare $W_{TCS}$ in Figure \ref{TCS4500}(c) is smaller than that in the no toll case. This is because the credit toll scheme is not optimal but arbitrarily given, which is shown as a gray dashed line in Figure \ref{TCS4500}(f). The results of Figure \ref{TCS4500}(e) and (f) also support this observation that the peak departure rate and accumulation are not reduced compared to day 0. In addition, Figure \ref{TCS4500}(g) displays the evolution of the credit price, which goes up from 0 [DKK] to 4.6 [DKK] first and then decreases to the equilibrium price, which is 3.1 [DKK]. This process is consistent with the evolution of the credit transactions shown in Figure \ref{TCS4500}(h). At first, the number of bought credits (i.e., the credit demand) is much higher than the number of sold credits (i.e., the credit supply), implying that the market is short of credits and travelers need to buy extra credits from the regulator. Thus, the credit price increases. After perceiving a high travel cost due to the relatively expensive credit payment, travelers adjust their departure time to avoid being charged that much, leading to a smaller credit demand and consequently lower credit price. Finally, at the equilibrium state, the credit supply nearly equals to the credit demand and the correspondingly credit prices becomes stable. Figure \ref{TCS4500}(i) illustrates the evolution process of the average credit payment (which is referred as toll payment later), which is the value of the used credits. Note that this toll payment is exactly the difference between the consumer surplus and the social welfare, as shown in equation \eqref{sw2}.

\begin{figure}[H]
    \centering
    \includegraphics[width=1\textwidth]{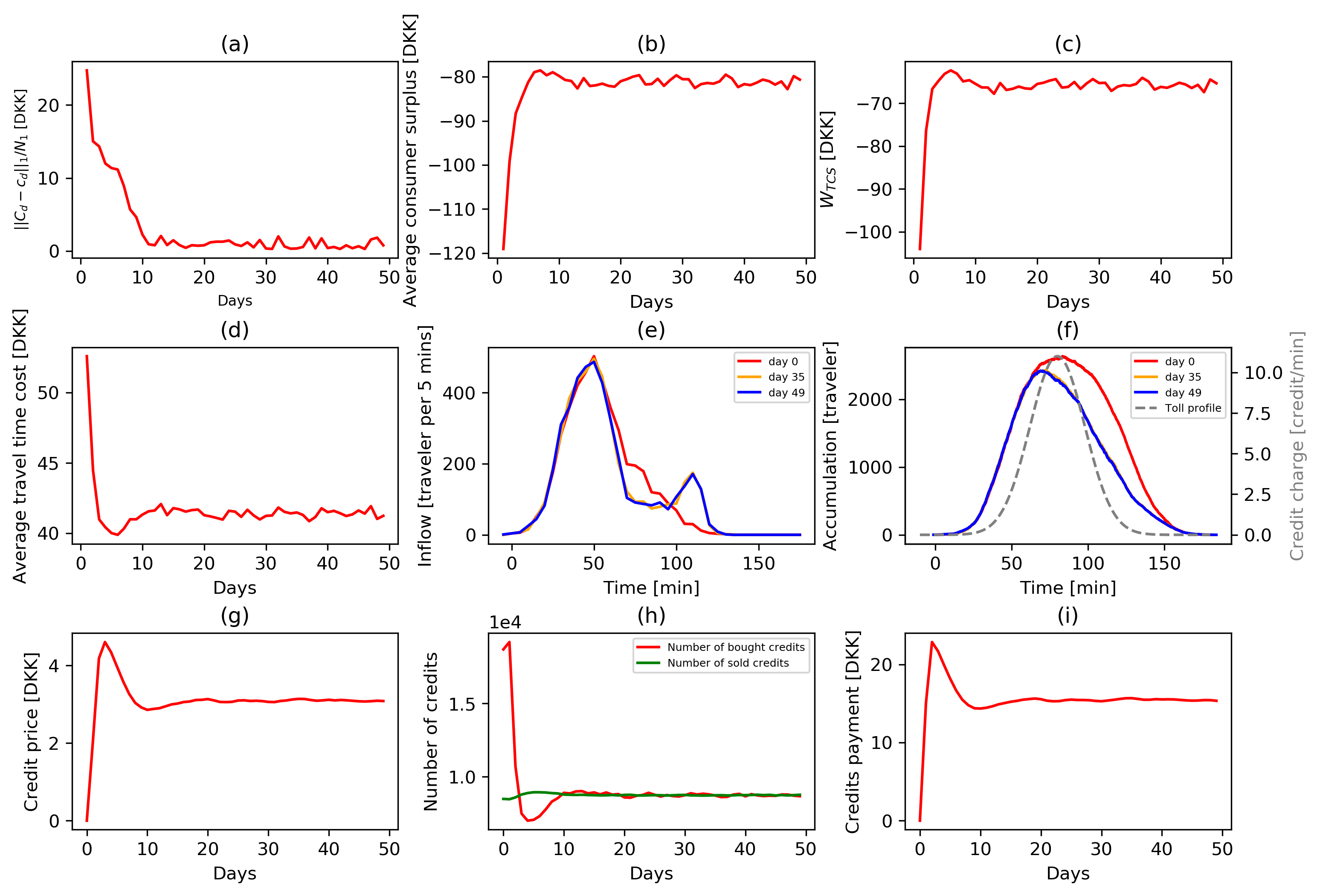}
    \caption{The evolution process of high congestion scenario with TCS}
    \label{TCS4500}
\end{figure}

When applying the same toll profile to the moderate congestion scenario, a similar pattern is observed, the detailed plots are omitted here to conserve space. The day-to-day evolution also reaches a stable state and market equilibrium with a credit price equal to $5.0$ [DKK].

Next, we test the uniqueness of the credit price by setting different initial prices and price adjustment parameter $k$ for the high congestion scenario. Similar patterns are observed in these tests for the moderate congestion scenario. The results are omitted here to be succinct. Figure \ref{Price_e} presents the credit price evolution with different initial prices $p_0=0,~2,~4,~6$ [DKK]. It appears that though the evolution processes are different and start from different initial values, the credit price eventually converges to the same level.

\begin{figure}[H]
    \centering
    \includegraphics[scale=0.75]{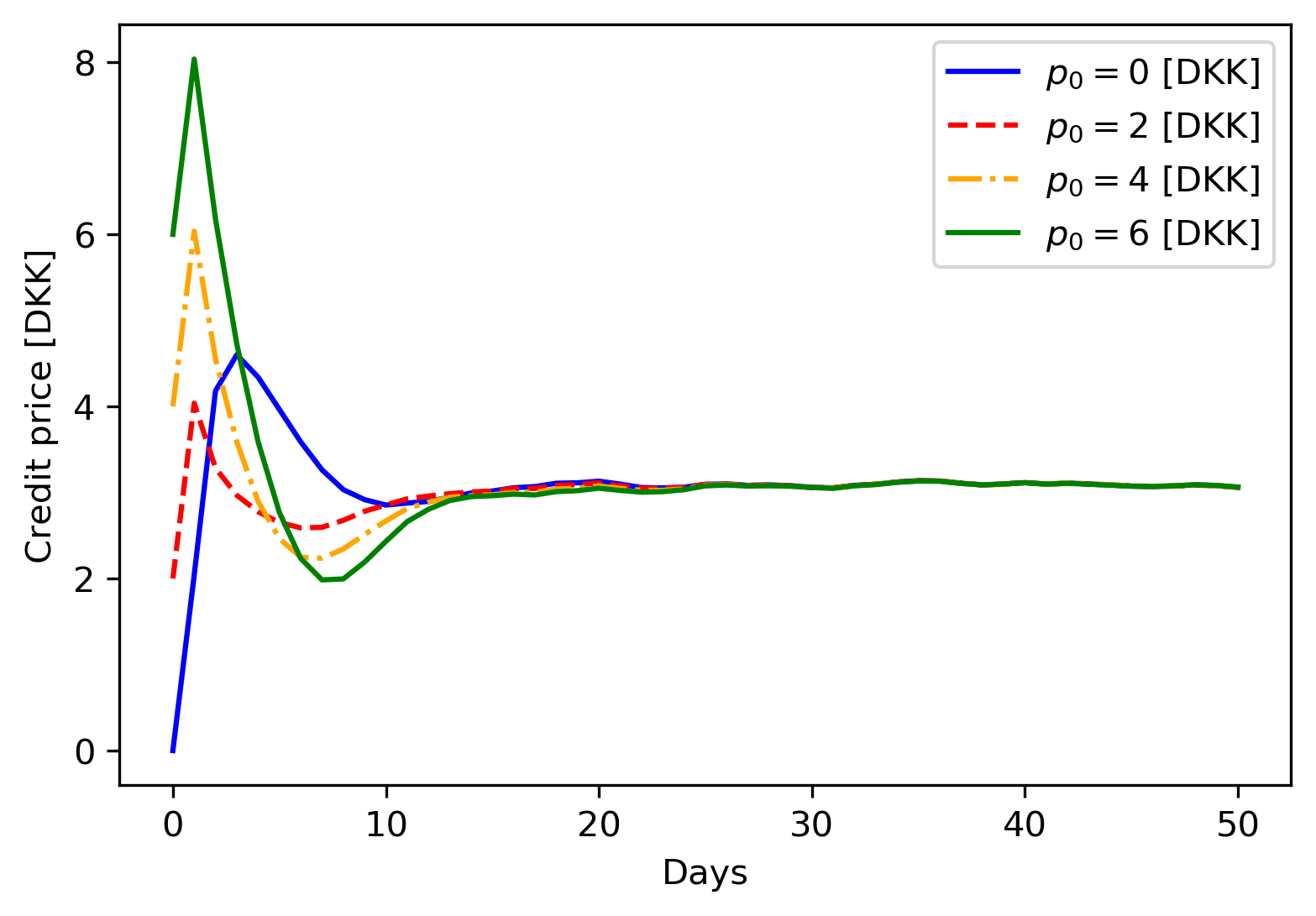}
    \caption{The evolution of credit price with different initial prices}
    \label{Price_e}
\end{figure}

In addition, let $k\in\{0.5 \times 10^{-4},1 \times 10^{-4},2 \times 10^{-4}\}$, we then examine the influence of the price adjustment parameter on price evolution. The results are shown in Figure \ref{Price_k}. When $k$ becomes larger, the credit market shows a greater reaction to the difference between the credit demand and supply, leading to a higher peak value and rapider change in price.

\begin{figure}[H]
    \centering
    \includegraphics[scale=0.75]{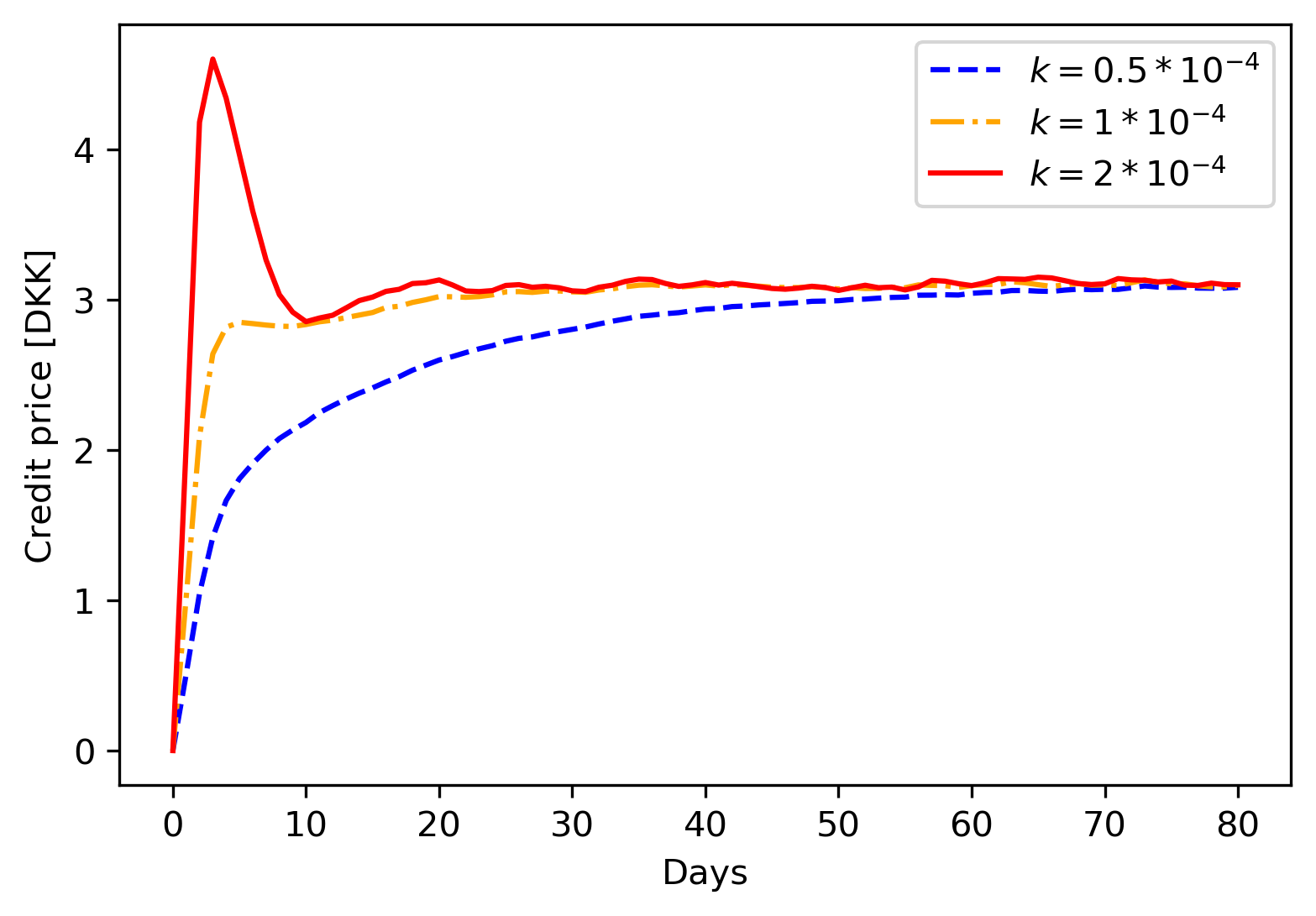}
    \caption{The evolution of credit price with different price adjustment parameters}
    \label{Price_k}
\end{figure}

Moreover, we validate \textbf{Hypothesis 1} by varying the credit endowment. Under the given credit toll profile, the minimum possible endowment $I_{\min}=1.61$ and the $I_{UE}=7.31$. Note the credit endowment is identical among all the travelers and keeps constant across days. Let $I=\{3,4,\dots,7,8\}$, the results are demonstrated in Figure \ref{endowment}. It is clear that the credit price monotonically decreases with $I$ and reaches 0 when $I$ exceeds $I_{UE}=7.31$.

\begin{figure}[H]
    \centering
    \includegraphics[scale=0.75]{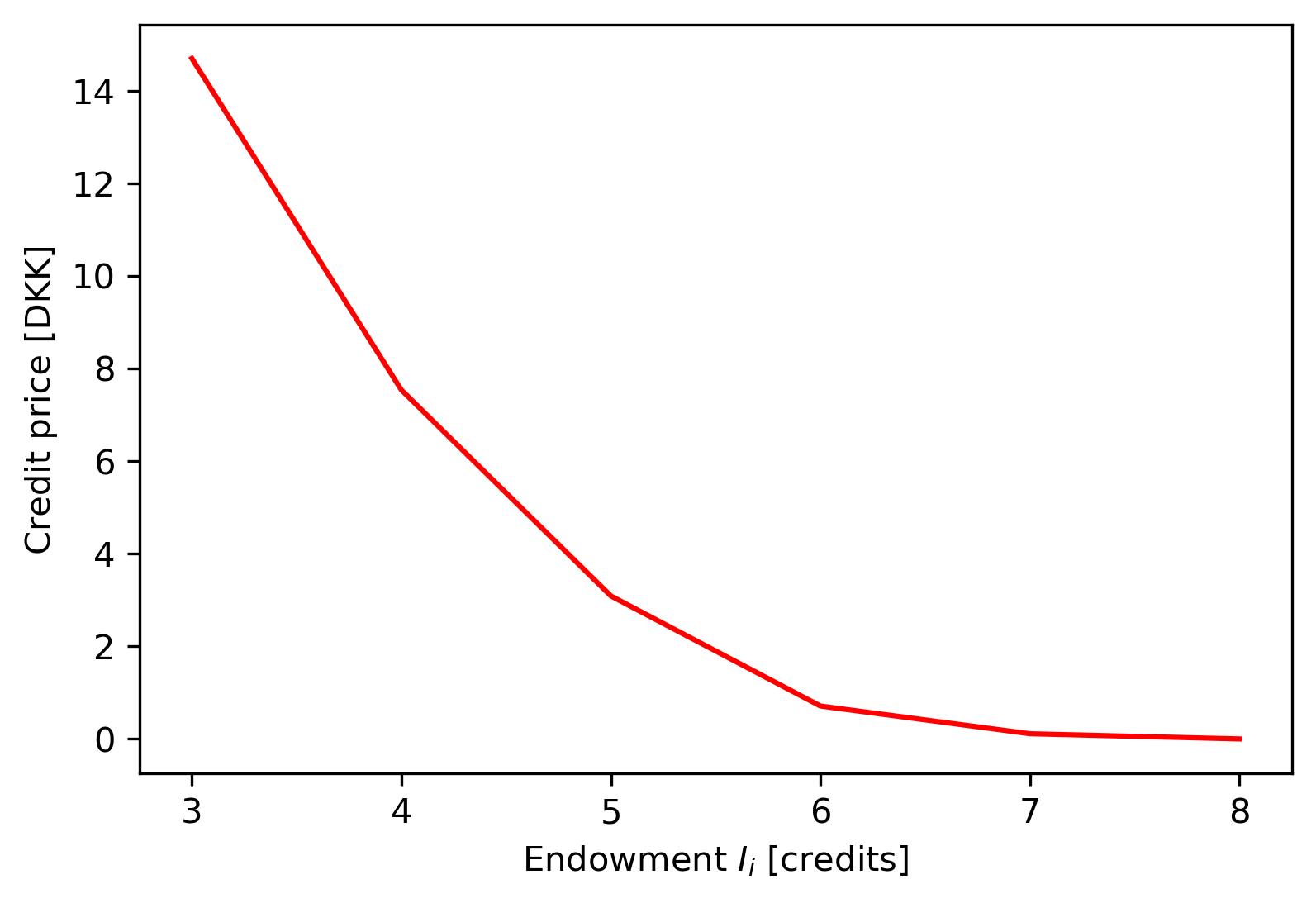}
    \caption{The evolution of credit price with different credit endowment}
    \label{endowment}
\end{figure}

\subsection{Bayesian optimization results}
\label{S5_3}
In this subsection, we present the performance of the TCS by optimizing the toll profile using the developed BO method for both demand scenarios.

The domains of the toll profile function parameters are set as $A\in[5,15]$ (unit: credits), $\xi\in[30,90]$ and $\sigma\in[10,50]$. The initial samples are generated via the LHS consisting of 30 points. For each sample input point, we run the day-to-day simulation and compute the travel time cost and schedule delay cost using the average value of the last 10 days after the equilibrium. The social welfare per capita can therefore be calculated using equation \eqref{sw2}. Figure \ref{optimized} shows the evolution process for the high congestion scenario with optimized toll profile, the parameters of which are $A=5.0$, $\xi=56.0$ and $\sigma=26.1$. It can be seen that the system becomes stable after 30 days, and ends up with a credit price of 10.3 [DKK]. Compared to the no toll case, the departure rate curve is flattened, and the peak accumulation is reduced from 2636 to 1354 [traveler], which overall leads to an improvement in the social welfare, raising from -99.6 [DKK] per capita to -37.0 [DKK] by 62.9\%. We observe from Figure \ref{optimized}(e) and Figure \ref{optimized}(f) that more travelers are departing later under the optimized toll profile compared to the no toll case, in order to avoid high credit tolls. However, due to the highly reduced travel time, there are 68.7\% travelers arriving at their destinations earlier than their desired arrival time under the optimized toll profile, while only 37.1\% in the no toll case.

\begin{figure}[H]
    \centering
    \includegraphics[width=1\textwidth]{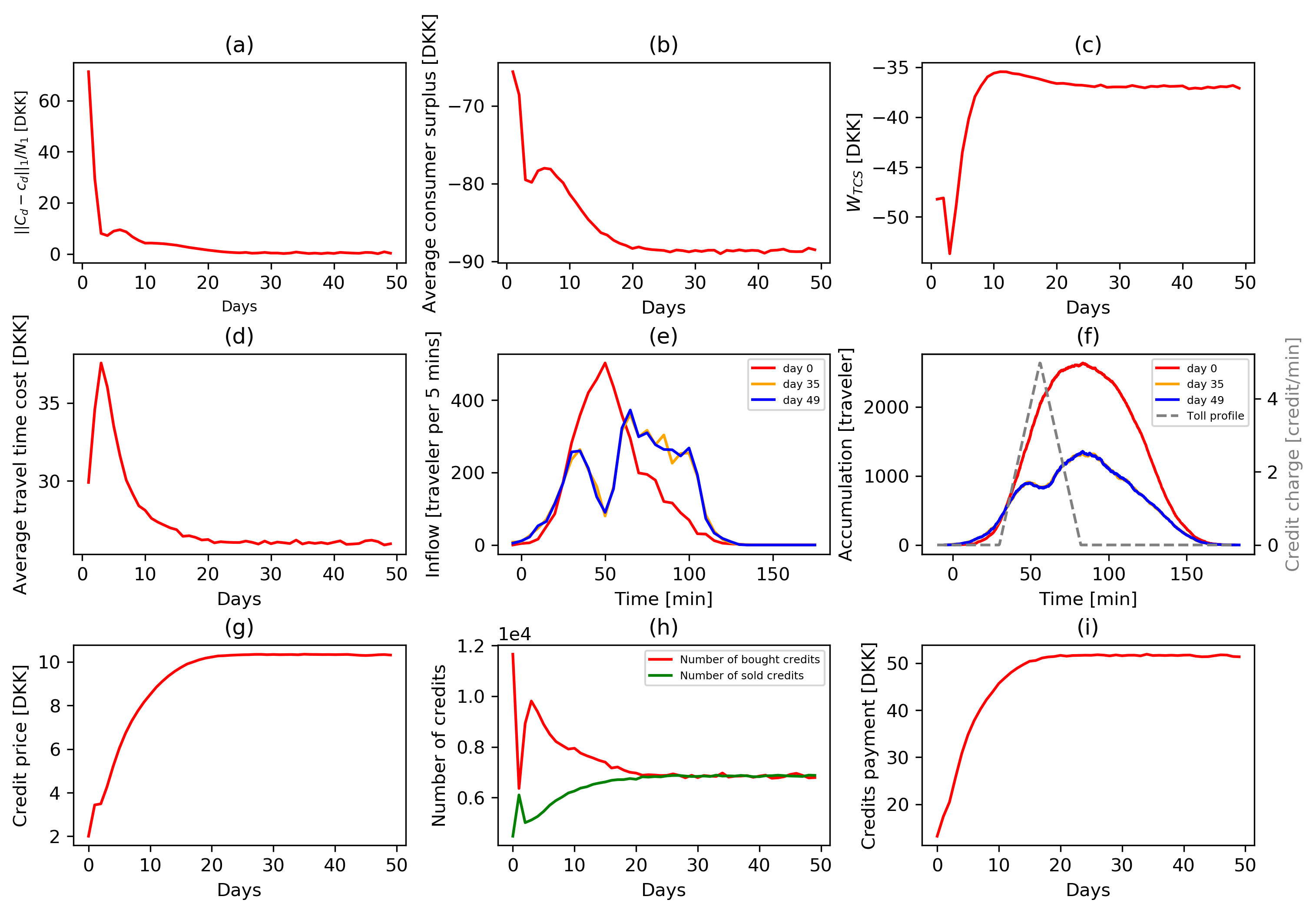}
    \caption{The evolution process of high congestion scenario with optimized TCS}
    \label{optimized}
\end{figure}

Similar patterns are also observed in the moderate congestion scenario, which are shown in Figure \ref{TCS3700best}. The daily average values of measurement variables for the last 10 days after the equilibrium for the no toll case, moderate congestion and high congestion scenario are listed in Table \ref{comparison}, where the second to eighth columns are the daily average monetary travel time cost, schedule delay cost, random utility, consumer surplus, social welfare per capita, toll payment and credit price, respectively. The 'mean' and 'std.dev' rows in Table \ref{comparison} represent the mean values and standard deviation across the last 10 days after the equilibrium for each scenario, respectively. It can be found that, in the moderate congestion scenario, the travel time cost per capita is reduced from 32.4 [DKK] to 27.1 [DKK] by 16.4\%, while the schedule delay cost is increased from 3.7 [DKK] to 4.5 [DKK] by 22.4\%, and overall social welfare per capita is improved by 14.8\%. In the high congestion scenario, the average travel time cost is reduced by 49.4\%, the schedule delay cost is reduced by 71.2\%, and the social welfare per capita is improved by 62.9\%. It is, therefore, concluded that when the congestion is severer, the improvement in terms of social welfare by imposing the optimized TCS is higher.

\begin{figure}[H]
    \centering
    \includegraphics[width=1\textwidth]{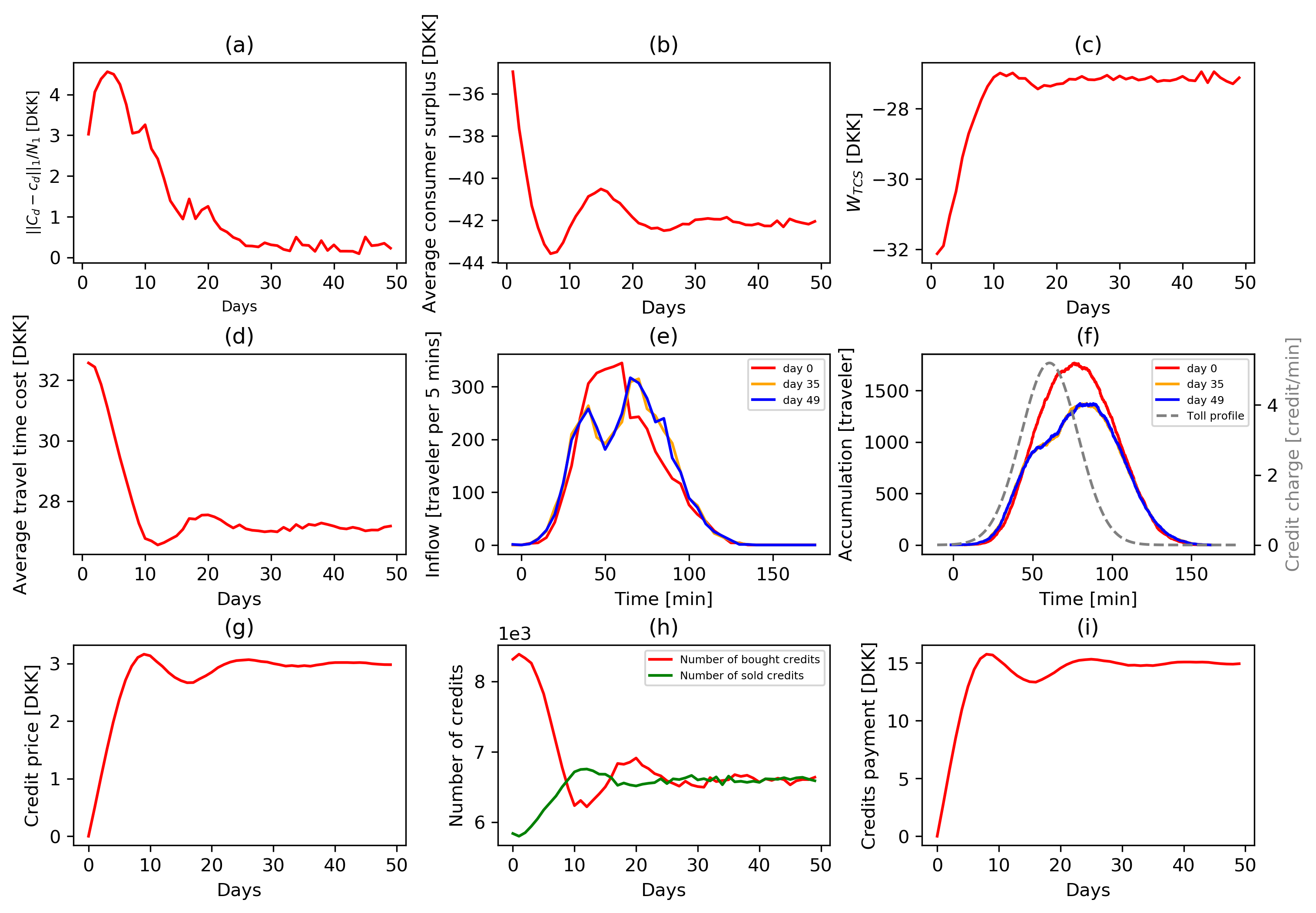}
    \caption{The evolution process of moderate congestion scenario with optimized TCS}
    \label{TCS3700best}
\end{figure}

\begin{table}[H]
  \scriptsize
  \centering
  \caption{Comparisons among no toll case, optimized TCS and optimized CP}\label{comparison}
  \vskip 0.2cm
  \begin{tabular}{p{40pt}p{40pt}p{30pt}p{30pt}p{30pt}p{30pt}p{40pt}p{45pt}}
\hline\noalign{\smallskip}
Unit: [DKK/cap]  & Travel time cost & Schedule delay & Random utility & Consumer surplus & Social welfare & Toll payment & Credit price [DKK]\\
\noalign{\smallskip}\hline\noalign{\smallskip}
\multicolumn{8}{l}{No toll case ($N1$)} \\
Mean & -32.4 & -3.7 & 4.3 & -31.9 & -31.9& -& -\\
Std.dev & 0.1 & 0.05 & 0.07 & 0.18 & 0.18 & -& -\\

\multicolumn{8}{l}{Optimized TCS ($N1$)} \\
Mean & -27.1 & -4.5 & 4.5 & -42.1 & -27.1 & 15.0 & 3.0\\
Std.dev & 0.05 & 0.07 & 0.07 & 0.11 & 0.11 & 0.07 & 0.01\\

\multicolumn{8}{l}{Optimized CP ($N1$)} \\
Mean & -26.3 & -5.2 & 4.4 & -44.4 & -27.1 & 17.3 & - \\
Std.dev & 0.04 & 0.05 & 0.06 & 0.09 &0.08 & 0.05 & -\\

\multicolumn{8}{l}{No toll case ($N2$)} \\
Mean & -51.5 & -51.5 & 3.3 & -99.6 & -99.6 & -& -\\
Std.dev & 0.5 & 1.5 & 0.04 & 2.0 & 2.0 &- & -\\

\multicolumn{8}{l}{Optimized TCS ($N2$)} \\
Mean & -26.1 & -14.8 & 3.9 & -88.6 & -37.0 & 51.6 & 10.3 \\
Std.dev & 0.1 & 0.13 & 0.03 & 0.17 & 0.11 & 0.16 & 0.02\\

\multicolumn{8}{l}{Optimized CP ($N2$)} \\
Mean & -30.6 & -11.1 & 3.9 & -65.1 & -37.9 & 27.3 & -\\
Std.dev & 0.08 & 0.03 & 0.03 & 0.14 & 0.09 & 0.05 & -\\
\hline
\end{tabular}
\end{table}

\subsection{Comparison with time of day pricing}
\label{S5_4}
Under the time of day pricing, travelers' behaviors are simulated based on the same travel behavior model. The only difference is that the toll is set in dollars instead of credits. Thus, the the experienced (or estimated) generalized cost $c_{i,d}^{CP}(t)$ for traveler $i$ on day $d$ departing at time $t$ can be written as follows:
\begin{equation}
\begin{aligned}
\label{exp_cost_cp}
    c_{i,d}^{CP}(t) =& -\theta_i\cdot \Big[T_{i,d}(t)+\delta_i\cdot SDE_i\cdot\big(T_i^*-t-T_{i,d}(t)\big)+\\
    &(1-\delta_i)\cdot SDL_i\cdot\big(t+T_{i,d}(t)-T_i^*\big)\Big]
    -Toll^{CP}(t)\cdot L_i\cdot w\\
    =& -\theta_i\cdot tc_{i,d}(t)-Toll^{CP}(t)\cdot L_i\cdot w
\end{aligned}
\end{equation}
where $Toll^{CP}(t)$ is the toll in dollars at time $t$.

Similar to Section \ref{S_4_1}, we then define the social welfare $W_{CP}$ of congestion pricing, which consists of consumer surplus and regulator revenue:
\begin{equation}
\begin{aligned}
\label{sw3}
    W_{CP} =& CS+RR\\
    =& \frac{1}{N}\sum_{i=1}^N\Big[-\theta_i\cdot tc_{i,d}(t_{i,d}^{dep})- Toll^{CP}(t_{i,d}^{dep})\cdot L_i\cdot w + \epsilon_i(t_{i,d}^{dep})\Big]+\\
    &\frac{1}{N}\sum_{i=1}^N Toll^{CP}(t_{i,d}^{dep})\cdot L_i\cdot w\\
    =& \frac{1}{N}\sum_{i=1}^N\Big[-\theta_i\cdot tc_{i,d}(t_{i,d}^{dep})+\epsilon_i(t_{i,d}^{dep})\Big]\\
    =& \frac{1}{N}\sum_{i=1}^N U'_{i,d}(t_{i,d}^{dep})
\end{aligned}
\end{equation}

The domains of the toll profile function parameter $A$ are slightly different from these before and set as: $A\in[5,20]$ (unit: DKK) and $A\in[5,30]$ (unit: DKK) for moderate and high congestion scenarios, respectively. The BO is used to optimize the toll profile, utilizing LHS sampling method to generate initial points. Figure \ref{CP_opt}(a)-(c) show the evolution process of moderate congestion scenario with optimized toll profile, the parameters of which are $A=18.8$, $\xi=60.7$ and $\sigma=17.9$. Combined with Figure \ref{TCS3700best} and Table \ref{comparison}, time of day pricing reaches the equilibrium of social welfare and flow pattern close to that of TCS case, with a higher toll rate. Figure \ref{CP_opt}(d)-(f) present the evolution process of high congestion scenario, with toll profile parameters $A=30$, $\xi=59.0$ and $\sigma=20.2$. Combined with Figure \ref{optimized} and Table \ref{comparison}, time of day pricing performs worse than TCS, in terms of, for example, social welfare and peak accumulation. As the best $A$ reaches the upper bound of the domain, setting a higher upper bound of $A$ might enable us to find a toll profile which leads to a close performance to the TCS. Nevertheless, it can be concluded that by setting proper toll profiles, which are not adaptive across days, TCS and CP can have the same performance in terms of equilibrium social welfare and flow pattern.

\begin{figure}[H]
    \centering
    \includegraphics[width=1\textwidth]{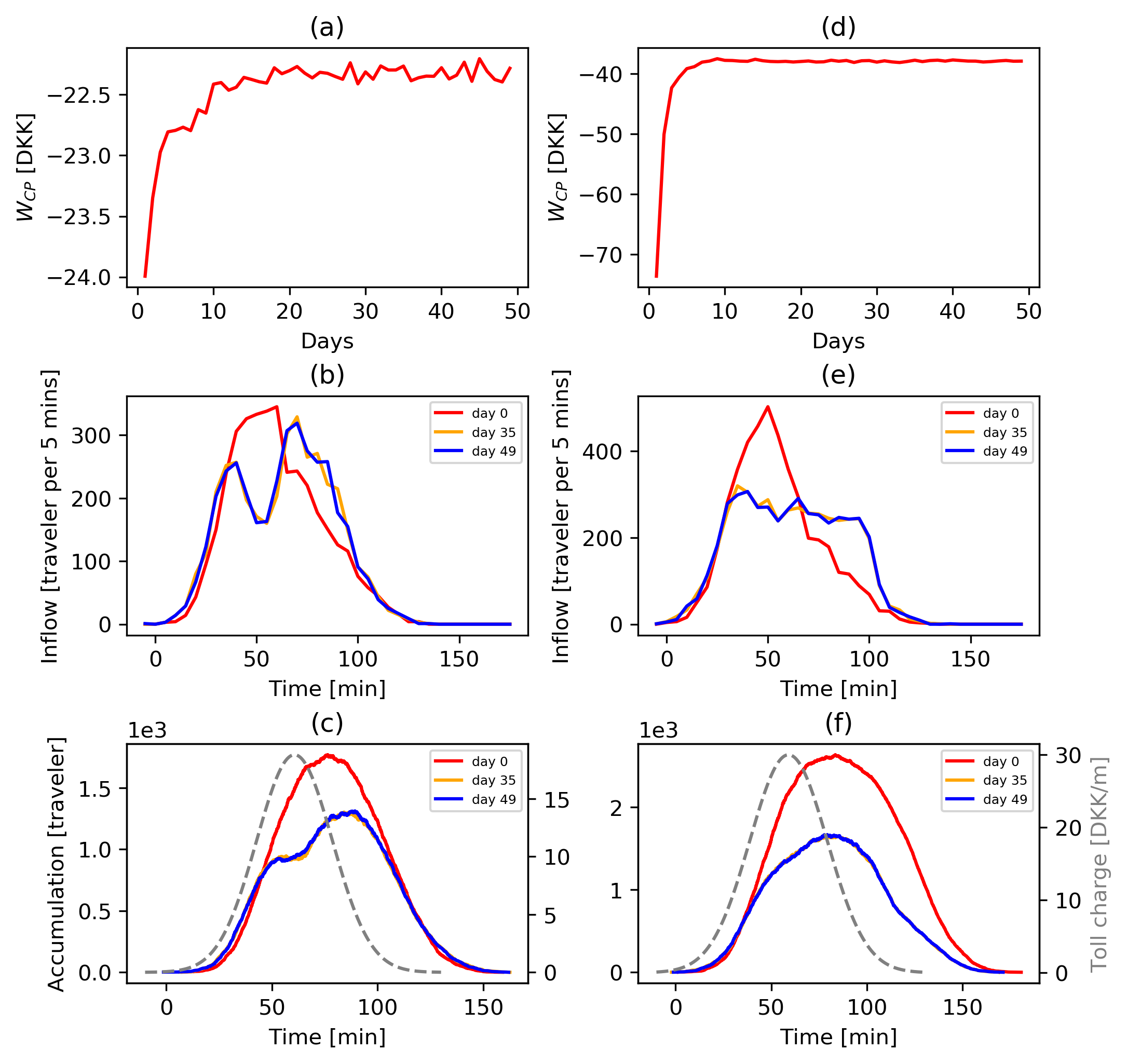}
    \caption{The evolution process of moderate and high congestion scenarios with optimized toll profiles}
    \label{CP_opt}
\end{figure}

\subsection{Comparison of alternative credit toll mechanisms}
\label{S5_5}
\subsubsection{Comparison between different credit toll profiles}
\label{S5_5_1}
Since the performance of the TCS ultimately depends on the choice of the toll profile functional form, here we compare the performance of the above Gaussian profile with two alternatives: a step toll and a triangular toll (T-toll) profile, both inspired by \citep{Zheng2016133} and \citet{daganzo2015distance} respectively, with the caveat that the domain of comparison is limited to the scenario of $N=3700$ and the symmetric profile assumption as before.  For the step toll, there are six parameters including toll charges of five steps and a position parameter indicating the center of the symmetric toll profile. For the T-toll and similarly to the Gaussian case, there are three parameters including the height, length of the base and, again, a position parameter.

Under both the step toll and T-toll, we observe a similar convergence pattern, therefore only the accumulations are shown in
Figure \ref{step_T}. Table \ref{step_T_table} summarizes the detailed information of all the performance measures considered. We relied on 100 Bayesian Optimization iterations for the step-toll case since there are six parameters to optimize while the smaller number of parameters for the two other cases relied on 40 iterations only. It was found that Gaussian toll and T-toll have a similar performance in terms of time related performance measures and welfare. Interestingly, the \textit{CS} shows a difference between the Gaussian and the T-toll, with a higher credit value flow in the latter. Such credit market differences may justify a careful look into efficient market design under more realistic market-related behaviours and equity aspects in TCS related policy decisions. 

Nevertheless, both Gaussian and T-Toll outperformed the step toll. This gap could be reduced under a higher number of steps in the step toll functional form.

\begin{figure}[H]
    \centering
    \includegraphics[width=1\textwidth]{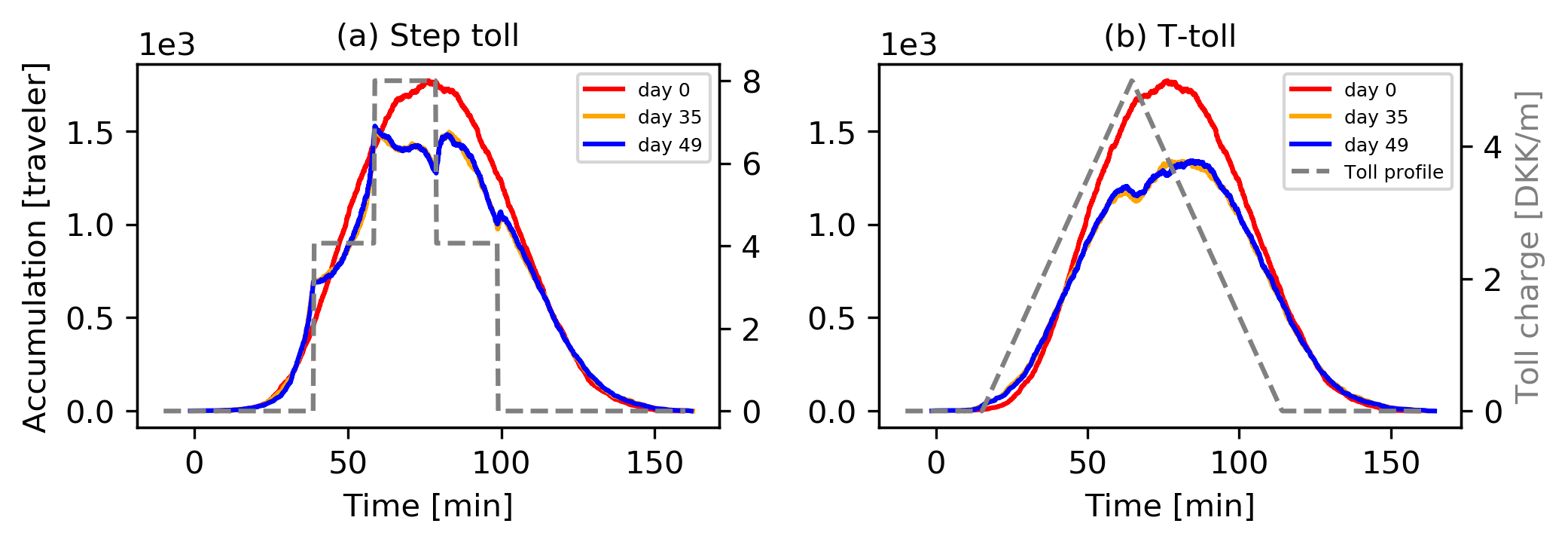}
    \caption{Accumulation for the optimized step toll and T-toll profiles}
    \label{step_T}
\end{figure}

\begin{table}[H]
  \scriptsize
  \centering
  \caption{Comparisons among the optimized Gaussian toll, T-toll and step toll TCS}\label{step_T_table}
  \vskip 0.2cm
  \begin{tabular}{p{40pt}p{40pt}p{30pt}p{30pt}p{30pt}p{30pt}p{40pt}p{45pt}}
\hline\noalign{\smallskip}
Unit: [DKK/cap]  & Travel time cost & Schedule delay & Random utility & Consumer surplus & Social welfare & Toll payment & Credit price [DKK]\\
\noalign{\smallskip}\hline\noalign{\smallskip}
\multicolumn{8}{l}{No toll case ($N1$)} \\
Mean & -32.4 & -3.7 & 4.3 & -31.9 & -31.9& -& -\\
Std.dev & 0.1 & 0.05 & 0.07 & 0.18 & 0.18 & -& -\\

\multicolumn{8}{l}{Optimized Gaussian toll ($N1$)} \\
Mean & -27.1 & -4.5 & 4.5 & -42.1 & -27.1 & 15.0 & 3.0\\
Std.dev & 0.05 & 0.07 & 0.07 & 0.11 & 0.11 & 0.07 & 0.01\\

\multicolumn{8}{l}{Optimized T-toll ($N1$)} \\
Mean & -27.1 & -4.5 & 4.5 & -46.8 & -27.0 & 19.7 & 3.9 \\
Std.dev & 0.04 & 0.04 & 0.07 & 0.10 &0.08 & 0.06 & 0.01\\

\multicolumn{8}{l}{Optimized step toll ($N1$)} \\
Mean & -28.8 & -4.5 & 4.1 & -38.4 & -29.2 & 9.2& 1.8\\
Std.dev & 0.07 & 0.06 & 0.05 & 0.13 & 0.09 &0.12 & 0.02\\
\hline
\end{tabular}
\end{table}

\subsubsection{Comparison between trip-length and travel-time based credit toll mechanisms}
\label{S5_5_2}

One could formulate a credit tolling mechanism function of travel time instead of trip length, allowing a direct accounting of contribution to congestion. Here, we consider a travel time based toll and compare it with the already presented trip length based toll. We simply change the term $p_d\cdot Toll(t)\cdot L_i\cdot w$ in equation \eqref{exp_cost} to $p_d\cdot Toll(t)\cdot T_{i,d}(t)\cdot w'$, where $w'=0.08$ and the other parameters are kept the same. Note that for the trip length based toll, the unit of $Toll(t)$ was [Credit/meter] while for the new travel time based toll, the unit of $Toll(t)$ is [Credit/minute]. Figure \ref{tl_tt} presents the accumulations for the different demand scenarios considered ($N=3700$ and for $N=4500$) for the optimized two types credit tolling mechanisms. The detailed performance measures are summarized in Table \ref{tl_tt_table}.

\begin{figure}[H]
    \centering
    \includegraphics[width=1\textwidth]{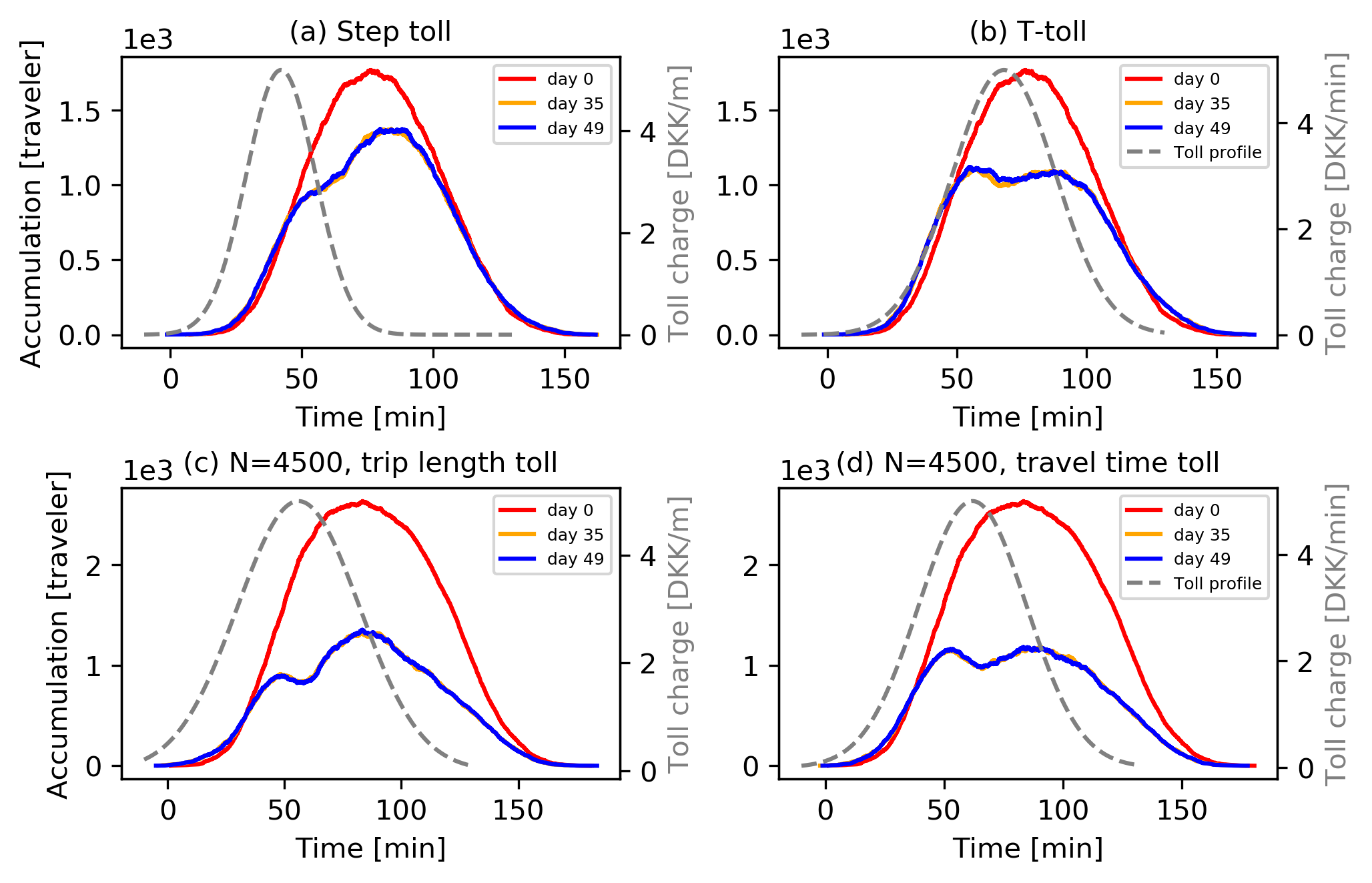}
    \caption{The accumulations under different optimized toll profiles}
    \label{tl_tt}
\end{figure}

\begin{table}[H]
  \scriptsize
  \centering
  \caption{Comparisons between the optimized trip length based TCS and optimized travel time based TCS}\label{tl_tt_table}
  \vskip 0.2cm
  \begin{tabular}{p{40pt}p{40pt}p{30pt}p{30pt}p{30pt}p{30pt}p{40pt}p{45pt}}
\hline\noalign{\smallskip}
Unit: [DKK/cap]  & Travel time cost & Schedule delay & Random utility & Consumer surplus & Social welfare & Toll payment & Credit price [DKK]\\
\noalign{\smallskip}\hline\noalign{\smallskip}
\multicolumn{8}{l}{No toll case ($N1$)} \\
Mean & -32.4 & -3.7 & 4.3 & -31.9 & -31.9& -& -\\
Std.dev & 0.1 & 0.05 & 0.07 & 0.18 & 0.18 & -& -\\

\multicolumn{8}{l}{Optimized trip length based TCS ($N1$)} \\
Mean & -27.1 & -4.5 & 4.5 & -42.1 & -27.1 & 15.0 & 3.0\\
Std.dev & 0.05 & 0.07 & 0.07 & 0.11 & 0.11 & 0.07 & 0.01\\

\multicolumn{8}{l}{Optimized travel time based TCS ($N1$)} \\
Mean & -25.6 & -6.1 & 4.3 & -49.0 & -27.3 & 21.6 & 4.33 \\
Std.dev & 0.04 & 0.06 & 0.06 & 0.10 &0.08 & 0.07 & 0.01\\

\multicolumn{8}{l}{No toll case ($N2$)} \\
Mean & -51.5 & -51.5 & 3.3 & -99.6 & -99.6 & -& -\\
Std.dev & 0.5 & -1.5 & 0.04 & 2.0 & 2.0 &- & -\\

\multicolumn{8}{l}{Optimized trip length based TCS ($N2$)} \\
Mean & -26.1 & -14.8 & 3.9 & -88.6 & -37.0 & 51.6 & 10.3 \\
Std.dev & 0.1 & 0.13 & 0.03 & 0.17 & 0.11 & 0.16 & 0.02\\

\multicolumn{8}{l}{Optimized travel time based TCS ($N2$)} \\
Mean & -26.0 & -14.2 & 3.7 & -78.3 & -36.4 & 41.9 & 8.4\\
Std.dev & 0.08 & 0.11 & 0.03 & 0.25 & 0.06 & 0.28 & 0.02\\
\hline
\end{tabular}
\end{table}

It can be found that both types of toll can converge to an acceptable degree. However, in both demand scenarios, the peak accumulation and departure rate with the travel time based toll are lower than for the trip length based toll. This results in a slightly better welfare performance for the travel time based toll only under the high congested scenario, thanks to small benefits at the schedule day level. Yet, under moderate congested conditions, benefits (disadvantages) in terms of travel time (schedule delay) savings are observed. Indeed, the travel time based toll may reflect the contribution to the congestion more directly. Note that, even with a fixed trip length, when a traveler considers changing the departure time, its associated the credit toll payment will also change. Here, the direct contribution to congestion is taken care by the optimized fixed toll rate and the credit market. Yet, from the traveler's perspective, when a traveler's evaluates departure times, the trip length based toll allows for a clear information on credit payment while the travel time needed for the user's travel time-based toll estimation is uncertain in practice. This falls under information provision and perception modelling research which, while related, is outside of the scope of this manuscript.

\section{Conclusions}
\label{S:6}
This paper proposes a tradable credit scheme (TCS) to manage urban transport network congestion considering the day-to-day evolution of traffic flow. The properties the TCS were examined via both analytical and simulation approaches. 
The properties were analysed in the light of recent generic TCS formulations, namely \cite{bao2019regulating, brands2020tradable}, applied to the case of area-based road traffic control, and extended for heterogeneous trip lengths, i.e. a distance-based tariff instead of access-based tariff. The TCS here at stake relies on a daily fixed credit price, a time-of-day varying tariff charging (or credit toll), a morning commute control policy and heterogeneous decision makers (in terms of choice preferences, trip length, and preferred arrival times).
Meanwhile, a network simulation model is developed to capture the day-to-day evolution of traffic flow. The model is built upon the \textit{trip-based MFD} \citep{arnott2013bathtub, daganzo2015distance} and its efficient implementation proposed in \citep{lamotte2016morning} which allowed us to efficiently study fundamental properties of the TCS. Finally we integrate this overall simulation model that combines the TCS and network simulation model with a Bayesian optimization framework for determining the optimal credit toll charging that maximize the total social welfare.

Analytically, this paper presents the existence of the market and network equilibrium point, the uniqueness of the credit market price, and the associated feasibility conditions. Numerically, the experiments, showcase the  day-to-day  model  properties  and  convergence along with the mobility, network and welfare performances for three comparative polices: no-control case, time-of-day pricing and the proposed TCS.

The numerical results showed good convergence on both the credit price and demonstrated stable network patterns, sustaining the analytical properties on price uniqueness and its inverse proportionality with the endowment. Notably, the proposed TCS improves the social welfare compared to the no-control case and demonstrates promised theoretical performance similarity with the time-of-day pricing. The framework proposed then allowed for a comparison of different credit charging mechanisms. While testing different functional forms for the credit toll profile, the optimized Gaussian-shaped toll had a similar travel and welfare performance to the triangular toll; both outperformed a simple step toll. Moreover, an alternative travel time based credit toll mechanism can perform relatively better than a trip length based toll mechanism in terms of reducing the travel time and enhancing the social welfare in heavy congested scenarios. Yet, while both mechanisms would have to rely on advanced technology for possible implementation, the trip length based toll scheme may have advantages in terms of behavioral uncertainty during the traveller decision making process.

The above developments and findings contribute to the increasing body of knowledge on mobility-related TCS, both in terms of insights into the properties of area-based TCS as well as key modelling and implementations frameworks for the design of future TCS.

In the path for increased knowledge on the feasibility of TCS, the design of TCS markets that accommodate detailed and individual market interactions along with different buying and selling strategies should be analyzed. The buying and selling behaviors are not considered in this study, while they are required for investigating potential market operation models for practical implementation of the TCS, both from a theoretical  \citep{dogterom2017tradable, trinity2020working} and empirical viewpoint \citep{brands2020tradable}. In this study we also kept the TCS settings, including  charging, endowment, and credit price, constant within a day. Nevertheless, it is acknowledged that adaptive credit charging, sporadic endowment and quantity control interventions by the regulator and real-time / within-day credit price adjustment may bring the TCS closer to efficient operations, especially under the non-recurrent conditions of real transportation system. Yet, detailed simulation and behavioural experiments approaches may again be required to overcome the common simplifying assumptions for analytical tractability. Nonetheless, the aforementioned findings of this paper bring insights into possible modelling techniques to include in the design and real-time operations of practice ready area-wide TCS. Finally, the consideration of additional and combined choice dimensions in future TCS efficiency analysis (such as mode, route, departure time and trip cancellation) is currently lacking in the current literature \citep{akamatsu2017tradable}, yet it is in much need for bringing TCS closer to practice .

\section*{Acknowledgements}
\label{S:acknowledgements}
This research was partially carried out under the NEMESYS project funded by the DTU-NTU (Nanyang Technical University) Alliance and the Trinity project funded by the U.S. National Science Foundation (ID:CMMI-1917891). We also thank the anonymous reviewers for valuable comments and suggestions.

%% The Appendices part is started with the command \appendix;
%% appendix sections are then done as normal sections
\appendix

\section{Proof of Theorem 1}
\label{P_thm1}
This proof begins with 

\begin{lemma}\label{lemma1}
$C_{i}$ and $p$ solve
\begin{equation}\label{a1}
    Q(p,Z)=0
\end{equation}
iff
\begin{equation}\label{a2}
    p=[p+\rho\cdot Z]_+
\end{equation}
where $\rho$ is a constant larger than 0.
\end{lemma}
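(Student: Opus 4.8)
The plan is to recognize that the fixed-point equation (\ref{a2}) is exactly the projection form of the complementarity system $p\geq 0$, $Z\leq 0$, $pZ=0$, and to connect that system to $Q(p,Z)=0$ via equation (\ref{funQ}). Since $Z$ is itself a function of $C_i$ and $p$ (through the chosen departure times), the claim reduces to an equivalence between the scalars $p$ and $Z$, and I would establish the two implications separately.

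For the direction $(\ref{a1})\Rightarrow(\ref{a2})$, I would first use the monotonicity properties of $Q$ to deduce that $Q(p,Z)=0$ forces $Z\leq 0$: if $Z>0$, then property 1) (for $p>0$) or property 2) (for $p=0$), combined with $Q(p,0)=0$ and $Q(0,0)=0$, yields $Q(p,Z)>0$, a contradiction. With $Z\leq 0$ established, equation (\ref{funQ}) gives $pZ=0$, so either $Z=0$ or $p=0$. If $Z=0$, then $[p+\rho Z]_+=[p]_+=p$ since $p\geq 0$; if $p=0$ (and $Z\leq 0$), then $[p+\rho Z]_+=[\rho Z]_+=0=p$ since $\rho>0$. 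In either case (\ref{a2}) holds.

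For the reverse direction $(\ref{a2})\Rightarrow(\ref{a1})$, I would split on the sign of $p+\rho Z$. If $p+\rho Z\geq 0$, then (\ref{a2}) reads $p=p+\rho Z$, which forces $Z=0$ because $\rho>0$; if $p+\rho Z<0$, then (\ref{a2}) reads $p=0$, and substituting back gives $\rho Z<0$, i.e. $Z<0$. Both branches satisfy $p\geq 0$, $Z\leq 0$ and $pZ=0$, so equation (\ref{funQ}) yields $Q(p,Z)=0$, which is (\ref{a1}).

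There is no genuine obstacle in this argument; it is essentially bookkeeping over the two branches of the $\max$ together with the standard complementarity reading of a projection fixed point. The only point requiring care is that the forward direction must not silently assume $Z\leq 0$, which is why I extract that sign restriction from the monotonicity of $Q$ before invoking (\ref{funQ}).
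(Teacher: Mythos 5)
Your proof is correct and takes essentially the same route as the paper: both directions reduce to the complementarity system $p\geq 0$, $Z\leq 0$, $p\cdot Z=0$ via equation (\ref{funQ}), followed by a two-case split (either $p=0$ or $Z=0$) to match the projection equation (\ref{a2}). Your additional step of extracting $Z\leq 0$ from the monotonicity properties of $Q$ before invoking (\ref{funQ}) is slightly more careful than the paper, which simply reads (\ref{funQ}) as the full three-condition equivalence, but this does not alter the structure of the argument.
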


\begin{proof}
Sufficiency: from (\ref{funQ}), Eq. (\ref{a1}) holds iff
\begin{equation}\label{a3}
    p\cdot Z=0,\quad p\geq 0,\quad Z\leq 0
\end{equation}
If $C_{i}$ and $p$ satisfy (\ref{a3}), then either
\begin{equation}\label{a4}
    p=0,\quad Z\leq 0
\end{equation}
or
\begin{equation}\label{a5}
    p\geq 0,\quad Z=0
\end{equation}
holds. Obviously, (\ref{a4}) and (\ref{a5}) satisfy (\ref{a2})

Necessity: if $C_{i}$ and $p$ satisfy (\ref{a2}), then either
\begin{equation}\label{a6}
\left\{
\begin{aligned}
     & p=0,\\
     & \rho\cdot Z\leq 0
\end{aligned}
\right.
\end{equation}
or
\begin{equation}\label{a7}
\left\{
\begin{aligned}
     & p>0,\\
     & Z= 0
\end{aligned}
\right.
\end{equation}
holds. And as (\ref{a6}) and (\ref{a7}) satisfy (\ref{a3}), then (\ref{a1}) holds. Therefore, Lemma \ref{lemma1} holds. \qed
\end{proof}

\begin{proof1}
By Lemma \ref{lemma1}, the equilibrium condition (\ref{equilibrium2}) is equivalent to
\begin{equation}\label{a8}
\left\{
\begin{aligned}
     & C_{i}(t_i)=\theta_i\cdot tc_{i}(t_i)+p\cdot Toll(t_i)\cdot L_i\cdot w\\
     & p=[p+\rho\cdot Z]_+\\
\end{aligned}
\right.
\end{equation}
Then, we introduce the fixed point theorem from \cite{khamsi2011introduction}:
\begin{theorem1}\label{thma1}
Let $\Omega$ be a bounded closed convex subset of $\mathbb{R}^m$ and let $\bm{g}:\Omega\rightarrow\Omega$ be continuous. Then $\bm{g}$ has a fixed point.
\end{theorem1}
According to \textbf{Algorithm 1}, it is obvious that for any $t_i\in [t_{i,0}^{dep}-\tau\cdot\Delta t,t_{i,0}^{dep}+\tau\cdot\Delta t]$, there exists a $tc_i(t_i)$ and $\lim_{t\rightarrow t_i}tc_i(t)=tc_i(t_i)$ holds, thus travel cost $tc_{i}(t_i)$ is non-negative and continuous. Since $p$ and $Toll(t_i)$ are also non-negative and continuous with regard to $t_i$, we can conclude that $C_{i}(t_i)$ is non-negative and continuous. Let $\bm{t}=[t_1,\dots,t_m]^T$, where $t_i\in [t_{i,0}^{dep}-\tau\cdot\Delta t,t_{i,0}^{dep}+\tau\cdot\Delta t]$, $\bm{C}(\bm{t},p)=[C_1(t_1),\dots,C_m(t_m)]^T$, $\bm{\theta}=[\theta_1,\dots,\theta_m]^T$, $\bm{tc}=[tc_1(t_1),\dots,tc_m(t_m)]^T$, $\bm{Toll}=[Toll(t_1),\dots,Toll(t_m)]^T$, and $\bm{L}=[L_1,\dots,L_m]^T$. Then there exists $\bm{y}=(y_1,\dots,y_m)^T\geq \bm{0}$, such that $\bm{0}\leq \bm{C}(\bm{t},p)\leq \bm{y}$. Therefore, a compact and convex set can be defined as $\Omega_{\bm{C}}=[0,y_1]\times\cdots\times[0,y_m]$, then for all $\bm{C}\in\Omega_{\bm{C}}$, $\bm{\theta\circ tc}+p\cdot w\cdot \bm{Toll}\circ\bm{ L}\in \Omega_{\bm{C}}$.

Denote $Z_d=\sum_i Toll(t_i)\cdot L_i\cdot w-I_{i,d}\cdot N$, then
\begin{equation}\label{a9}
\begin{aligned}
\lim_{p\rightarrow\infty}Z_d(\bm{C},p)
=&\lim_{p\rightarrow\infty}\sum_i Toll(t_i)\cdot L_i\cdot w-I_{i,d}\cdot N\\
=&(I_{\min}-I)N< 0,\quad \forall \bm{C}\in\Omega_{\bm{C}}
\end{aligned}
\end{equation}
For some $\bm{\hat{C}}\in\Omega_{\bm{C}}$, if 
\begin{equation}\label{a10}
    Z(\bm{\hat{C}},p)\leq 0, \forall p\geq 0
\end{equation}
then $[p+\rho Z(\bm{\hat{C}},p)]_+\leq p,\forall p\geq 0$. Let us define $\Omega_{\bm{C}}^-$ as the set of $\bm{\hat{C}}\in\Omega_{\bm{C}}$ satisfying condition (\ref{a10}) and define $\Omega_{\bm{C}}^+=\Omega_{\bm{C}}\backslash\Omega_{\bm{C}}^-$. Then there exists $\bm{\hat{C}}\in\Omega_{\bm{C}}^+$, such that $Z(\bm{\hat{C}},p)>0$ for some $p\geq 0$. According to (\ref{a9}) and (\ref{a10}), there exists $\bar{p}\geq 0$, such that $Z(\bm{\hat{C}},\bar{p})\leq 0,~\forall p\geq \bar{p}\Rightarrow [p+\rho Z(\bm{\hat{C}},p)]_+\leq p,\forall p\geq \bar{p}$. Let $p^+=\max_{p\leq\bar{p}}[p+\rho Z(\bm{\hat{C}},p)]_+$, then $\forall p\in [0,p^+],[p+\rho Z(\bm{\hat{C}},p)]_+\leq p^+$. Therefore $\forall p\in\Omega_p\triangleq[0,\max_{\bm{\hat{C}}\in\Omega_{\bm{C}}^+}p^+]$ and $\bm{C}\in\Omega_{\bm{C}}$, $[p+\rho Z(\bm{\hat{C}},p)]_+\in \Omega_{p}$.

Based on the analysis above, $\Omega_{\bm{C}}\times\Omega_{p}$ is compact and convex. Since travel cost $\bm{tc}(\cdot)$, $[\cdot]$ and $Toll(\cdot)$ are continuous, then by Theorem A\ref{thma1}, (\ref{a8}) has at least one fixed point, implying that there exists at least one one equilibrium point of the proposed dynamic system. \qed
\end{proof1}

\section{Proof of Theorem 2}
\label{P_thm2}
\begin{proof2}
Assume there are two equilibrium credit prices $p_1$ and $p_2$, then by (\ref{price}) and (\ref{funQ}), we have
\begin{equation}\label{funQ_proof}
\begin{aligned}
    &p_1\big[\sum_i Toll(t_{i,p_1}^{dep})\cdot L_i\cdot w-I_{i,d}\cdot N\big]=0,~p_1\geq0,\\
    &p_2\big[\sum_i Toll(t_{i,p_2}^{dep})\cdot L_i\cdot w-I_{i,d}\cdot N\big]=0,~p_2\geq0,\\
    &\sum_i Toll(t_{i,p_1}^{dep})\cdot L_i\cdot w-I_{i,d}\cdot N \leq0,\\
    &\sum_i Toll(t_{i,p_2}^{dep})\cdot L_i\cdot w-I_{i,d}\cdot N \leq0
\end{aligned}    
\end{equation}
Thus,
\begin{equation}\label{price_proof}
\begin{aligned}
    &(p_1-p_2)\Big(\sum_i Toll(t_{i,p_1}^{dep})\cdot L_i\cdot w-\sum_i Toll(t_{i,p_2}^{dep})\cdot L_i\cdot w\Big)\\
    &=p_1 \sum_i Toll(t_{i,p_1}^{dep})\cdot L_i\cdot w-p_1 \sum_i Toll(t_{i,p_2}^{dep})\cdot L_i\cdot w \\
    &+ p_2 \sum_i Toll(t_{i,p_2}^{dep})\cdot L_i\cdot w - p_2\sum_i Toll(t_{i,p_1}^{dep})\cdot L_i\cdot w\\
    &=p_1\big[I_{i,d}\cdot N-\sum_i Toll(t_{i,p_2}^{dep})\cdot L_i\cdot w\big]+p_2\big[I_{i,d}\cdot N-\sum_i Toll(t_{i,p_1}^{dep})\cdot L_i\cdot w\big]\\
    &\geq 0
\end{aligned}    
\end{equation}
By (\ref{p_condition}), the equality in (\ref{price_proof}) holds if and only if $p_1=p_2$. Hence, the equilibrium credit price is unique. \qed
\end{proof2}
%% References
%%
%% Following citation commands can be used in the body text:
%% Usage of \cite is as follows:
%%   \cite{key}          ==>>  [#]
%%   \cite[chap. 2]{key} ==>>  [#, chap. 2]
%%   \citet{key}         ==>>  Author [#]

%% References with bibTeX database:

% \bibliographystyle{model1-num-names}

%% New version of the num-names style
\bibliographystyle{elsarticle-harv}
\bibliography{ref.bib}

%% Authors are advised to submit their bibtex database files. They are
%% requested to list a bibtex style file in the manuscript if they do
%% not want to use model1-num-names.bst.

%% References without bibTeX database:

% \begin{thebibliography}{00}

%% \bibitem must have the following form:
%%   \bibitem{key}...
%%

% \bibitem{}

% \end{thebibliography}

\end{document}